\DeclareMathOperator{\Tr}{Tr}
\DeclareMathOperator{\Range}{Range}
\DeclareMathOperator{\grad}{grad}
\DeclareMathOperator{\Hess}{Hess}
\DeclareMathOperator{\Crit}{Crit}
\DeclareMathOperator{\codim}{codim}
\DeclareMathOperator{\Ad}{Ad}
\DeclareMathOperator{\Aut}{Aut}
\DeclareMathOperator{\End}{End}
\DeclareMathOperator{\im}{im}
\newcommand{\jmdstack}[2]{\genfrac{}{}{0pt}{}{#1}{#2}}
\newcommand{\rmd}{\mathrm{d}}
\newcommand{\rmT}{\mathrm{T}}
\newcommand{\BH}{\mathcal{B}(\mathcal{H})}
\newcommand{\MH}{\mathcal{M}(\mathcal{H})}
\newcommand{\UH}{\mathrm{U}(\mathcal{H})}
\newcommand{\uH}{\mathrm{u}(\mathcal{H})}
\newcommand{\SUH}{\mathrm{SU}(\mathcal{H})}
\newcommand{\PUH}{\mathrm{PU}(\mathcal{H})}
\newcommand{\id}{\mathrm{id}}
\newcommand{\identity}{\mathds{1}}
\newcommand{\ES}{\mathcal{V}}
\newcommand{\GrES}[2]{\mathrm{Gr}_{#1}\big(\ES_{#2}\big)}
\newcommand{\AAdag}{\mathcal{A}^{2}}
\newcommand{\AAdagSq}{\mathcal{A}^{4}}
\newcommand{\Herm}{\mathbb{H}(\mathcal{H})}
\newcommand{\phase}{\phi}
\newtheorem{theorem}{Theorem}
\newtheorem{lemma}{Lemma}
\newtheorem{definition}{Definition}
\newtheorem{example}{Example}
\author{Jason M. Dominy, Tak-San Ho, and Herschel A. Rabitz
\thanks{Manuscript received February 9, 2011; revised MMMMMMM DD, 2013. This work was supported, in part, by U.S. Department of Energy (DOE) Contract No. DE-AC02-76-CHO-3073 through the Program in Plasma Science and Technology at Princeton.  Partial support is also acknowledged from the DOE grant No. DE-FG02-02ER15344.}
\thanks{J. M. Dominy was with the Program in Applied and Computational Mathematics, Princeton University, Princeton NJ 08544, during the writing of this paper.  He is currently with the Center for Quantum Information Science and Technology, University of Southern California, 90089 (e-mail: jdominy@usc.edu)}
\thanks{T.-S. Ho is with the Department of Chemistry, Princeton University, Princeton NJ 08544 (e-mail: tsho@princeton.edu)}
\thanks{H. A. Rabitz is with the Department of Chemistry and the Program in Applied and Computational Mathematics, Princeton University, Princeton NJ 08544 (e-mail: hrabitz@princeton.edu)}
\thanks{Document Object Identifier }
}
\title{Characterization of the Critical Sets of Quantum Unitary Control Landscapes}
\begin{document}

\maketitle

\begin{abstract}This work considers various families of quantum control landscapes (i.e. objective functions for optimal control) for obtaining target unitary transformations as the general solution of the controlled Schr\"odinger equation.  We examine the critical point structure of the kinematic landscapes $J_{F}(U) = \|(U-W)A\|^{2}$ and $J_{P}(U) = \|A\|^{4} - |\Tr( AA^{\dag} W^{\dag}U)|^{2}$ defined on the unitary group $\UH$ of a finite-dimensional Hilbert space $\mathcal{H}$.  The parameter operator $A\in\BH$ is allowed to be completely arbitrary, yielding an objective function that measures the difference in the actions of $U$ and the target $W$ on a subspace of state space, namely the column space of $A$. The analysis of this function includes a description of the structure of the critical sets of these kinematic landscapes and characterization of the critical points as maxima, minima, and saddles.  In addition, we consider the question of whether these landscapes are Morse-Bott functions on $\UH$.  Landscapes based on the intrinsic (geodesic) distance on $\UH$ and the projective unitary group $\PUH$ are also considered.  These results are then used to deduce properties of the critical set of the corresponding dynamical landscapes.
\end{abstract}

\begin{IEEEkeywords}
Quantum control, quantum information, optimization.
\end{IEEEkeywords}

\section{Introduction}
\IEEEPARstart{A}{n} important application of quantum optimal control theory is the generation of target quantum logic gates for quantum information processing.  The goal of such optimal control is to arrange the dynamics such that the desired logical gate is realized as the final time unitary evolution operator, which is the general solution of the controlled Schr\"odinger equation.  In most applications, the optimization goal is not a single unitary operator, but a family of logically equivalent operators.  For example, since the global phase is not observable, the goal may be any unitary operator that is equivalent to the target gate up to global phase.  Likewise, in some cases only a subspace of the Hilbert space $\mathcal{H}$ of states may be used for the quantum register, so that any unitary propagator should be acceptable that acts as the target gate on that subspace.  In contrast to other quantum control problems, for example the maximization of a quantum mechanical observable, there is no unique or natural choice for the objective function against which the optimization is performed.  Indeed, \emph{any} smooth function $J:\UH\to\mathbb{R}$ with a global minimum at the target unitary gate or gates is a candidate objective for the unitary problem.  But, as we will see, some choices may exhibit more favorable convergence and other properties.

The theory of quantum control landscapes has been developed over a series of papers, including \cite{Rabitz2004, Rabitz2005, Rabitz2006, Rabitz2006a, Wu2008,Wu2008a,Hsieh2008, Hsieh2008a,Ho2009}, as a way to think about the problem of finding optimal solutions within quantum control.  This is pursued principally by building up a picture of the topography of the objective function as a landscape over the space of all admissible controls, typically through analysis of the structure of the set of critical points of the objective function.  This provides direct information about the gradient flow associated with the landscape.  For example, the presence of a local maximum or minimum can act as a ``trap'' for the gradient flow or its time-reversal, respectively.  And although saddles do not trap the flow, the flow can be greatly slowed in close proximity to a saddle.  While, for a given objective function, gradient ascent/descent may not be the most efficient method for finding optimal controls, the topography of the landscape and its impact on the behavior of the gradient flow offers insights into the expected performance of classes of algorithms (local deterministic algorithms versus more non-local stochastic algorithms, for example).  As a consequence, a quantum control landscape analysis will typically begin with the identification of the set of critical points.

The critical points of the kinematic landscape having been identified, they may then be characterized as local maxima, local minima, and saddles.  As has been demonstrated for other classes of kinematic quantum control landscapes, the landscapes considered in this work will turn out to have global maxima and minima, but no other local extrema capable of impeding optimization.  Moreover, the critical sets will be shown generally to comprise disjoint submanifolds, and these submanifolds are nondegenerate in the Morse-Bott sense \cite{Bott1954, Atiyah1983, Nicolaescu2007}.  In other words, the null space of the Hessian of $J$ and the tangent space of the critical submanifold coincide at each critical point $U\in\UH$.  This condition identifies the kinematic landscape as a Morse-Bott function, which is interesting for at least two reasons.  First, certain results about the convergence of the gradient flow may be proved for Morse-Bott functions, in particular that (on a compact manifold) the gradient flow always converges to a critical point \cite{Helmke1996}.  Second, the identification of the null space of the Hessian and the tangent space of the critical submanifold is important for certain numerical methods, such as second order D-MORPH \cite{Beltrani2011}, that are designed to explore the critical sets.

Let $\mathbb{K}$ denote the space of admissible control functions.  For the present analysis, $\mathbb{K}$ will be $L^{2}([0,T];\mathbb{R})$, the space of square integrable real-valued functions on the interval $[0,T]$, where $T$ is some fixed final time over which the controlled dynamics take place.  Let $\mathcal{H}$ be a complex Hilbert space of dimension $N<\infty$ and let $\BH$ denote the space of bounded linear operators on $\mathcal{H}$ endowed with the real Hilbert-Schmidt inner product $\langle A, B\rangle = \Re\Tr(A^{\dag}B)$.  $\UH\subset\BH$ will denote the unitary group on $\mathcal{H}$ endowed with the Riemannian metric induced by the Hilbert-Schmidt inner product, and $\uH$ will denote the corresponding Lie algebra of skew-Hermitian operators acting on $\mathcal{H}$.  Also let $U_{T}:\mathbb{K}\rightarrow \UH$ denote the map, defined implicitly by the Schr\"odinger equation in the dipole approximation 
\IEEEpubidadjcol
\begin{align}
	i\hbar\frac{\rmd U}{\rmd t}(t,t_{0}) & = \big(H_{0} - \mu\mathcal{E}(t)\big)U(t,t_{0}), & U(t_{0},t_{0}) = \identity,
\end{align}
such that $U_{T}(\mathcal{E}) = U(T,0)[\mathcal{E}]$ is the unitary propagator at time $T$ for the control field $\mathcal{E}\in\mathbb{K}$. Finally, for any candidate objective function $J:\UH\to\mathbb{R}$ (the ``kinematic landscape''), let $\tilde{J}:\mathbb{K}\rightarrow \mathbb{R}$ (the ``dynamical landscape'') be the composition $\tilde{J} = J\circ U_{T}$.  Then 
\begin{equation}
	\grad \tilde{J}(\mathcal{E}) = (\rmd_{\mathcal{E}}U_{T})^{*}\Big(\grad J\big(U_{T}(\mathcal{E})\big)\Big),
\end{equation}
where $(\rmd_{\mathcal{E}}U_{T})^{*}$ is the operator adjoint of the differential $\rmd_{\mathcal{E}}U_{T}$.  Much of the important information about the nature of the gradient flow of $\tilde{J}$ is embodied in the critical points of this landscape, i.e. those fields $\mathcal{E}\in\mathbb{K}$ for which $\grad \tilde{J}(\mathcal{E}) = 0$.  Any $\mathcal{E}\in\mathbb{K}$ such that $\rmd_{\mathcal{E}}U_{T}$ is full rank and $\grad J\big(U_{T}(\mathcal{E})\big) = 0$ (so-called ``regular'' critical points) will satisfy the condition.  There may be other critical points where $\rmd_{\mathcal{E}}U_{T}$ is rank-deficient and $\grad J\big(U_{T}(\mathcal{E})\big)$ may or may not be zero (``singular'' points).  Consideration of such singular points is important for a complete understanding on the dynamical control landscape.  However, since $U_{T}$ is a highly nonlinear map from an infinite-dimensional space to a finite-dimensional space, singular points are expected to be rare and will not be considered in the present analysis.  Singular points and their role in quantum control landscapes have only recently begun to be studied \cite{Wu2012,  Fouquieres, Pechen2011, Rabitz2012}.

Several classes of landscapes for generating target unitary transformations will be considered.  They include $J_{F}(U) = \|(U-W)A\|^{2}$ and $J_{P}(U) = \|A\|^{4} - |\Tr( AA^{\dag} W^{\dag}U)|^{2}$ for some fixed target $W\in\UH$ and some arbitrary fixed $A\in\BH $, as well as the corresponding landscapes using the intrinsic (geodesic) distance on the unitary group $\UH$ and the projective unitary group $\PUH$, rather than the norm (Euclidean) distance as in $J_{F}$ and $J_{P}$.  The parameter operator $A$, though arbitrary in our analysis, can have some implications for the landscape topography and should be chosen carefully in application.  $A$ might be chosen to be a partial isometry, for example, so that $AA^{\dagger}$ is a projection onto a subspace of $\mathcal{H}$.  Such a form may be desirable in quantum information when only a subspace of $\mathcal{H}$ is designated as the quantum register.  Alternatively, $A$ might be chosen to be nondegenerate, which can have the effect of simplifying the landscape topography by making all critical submanifolds zero-dimensional, and perhaps making optimization easier.  

Landscapes of the form $J_{F}$ and $J_{P}$ have been studied in the past \cite{Rabitz2005, Hsieh2008a, Ho2009}.  The present paper extends these various works by broadening the families of landscapes under consideration, describing the structures of the critical submanifolds and Hessian eigenbundles (vector bundles formed from the Hessian eigenspaces along critical submanifolds; see Figure \ref{fig:vecBund}), and directly addressing the issue of Morse-Bott nondegeneracy of the critical submanifolds.  
\begin{figure}
	\centering
	\includegraphics[scale=0.5,clip = true]{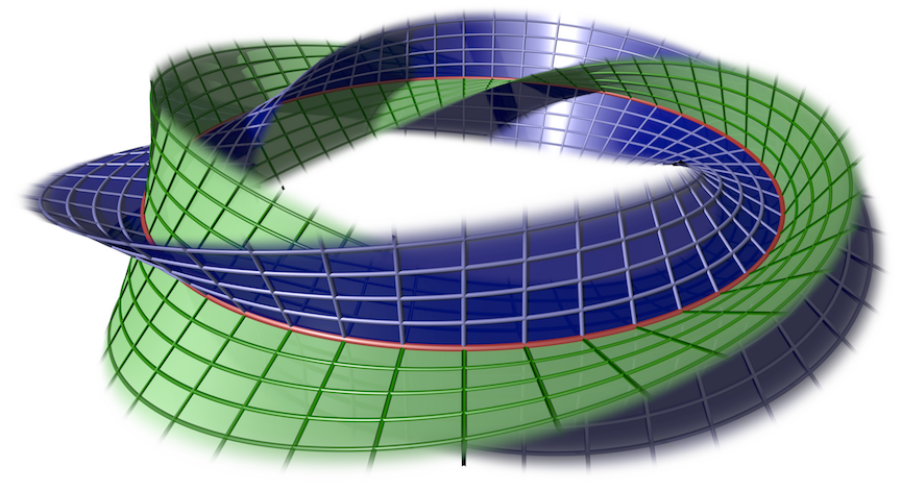}
	\caption{Example of two orthogonal rank 1 vector bundles (i.e. line bundles) over $S^{1}$ in $\mathbb{R}^{3}$.  Analogously, each Hessian eigenbundle considered in this paper is a rank $m$  vector bundle over a critical submanifold of $\UH$, where $m$ is the multiplicity of the corresponding eigenvalue.  It is formed by ``gluing'' together the $m$-dimensional $\lambda$-eigenspaces at each point of the critical submanifold for a given Hessian eigenvalue $\lambda$.}
	\label{fig:vecBund}
\end{figure}

The paper is organized as follows.  Sections \ref{sec:frobkinematic} and \ref{sec:phaseInvKinematic} describe the critical points of the kinematic landscapes $J_{F}$ and $J_{P}$.  The two additional landscapes based on geodesic distance are presented and analyzed in Section \ref{sec:intrinsicDist}.  In Section \ref{sec:dynamicalAnalysis}, these results are related back to the dynamical landscapes.  The overall results are summarized in Section \ref{sec:summary}.  Two appendices are included which provide a proof of the infinite Fr\'echet differentiability of the control-to-propagator map $U_{T}$ and a derivation of the gradient of the geodesic distance landscape on $\UH$.

\bigskip

\section{Kinematic Critical Point Analysis of Certain Phase-Dependent Landscapes\label{sec:frobkinematic}}
For now we put aside the dynamical component of the map and focus just on the critical point analysis of the kinematic map 
\begin{equation}
	J_{F}(U) = \|(U-W)A\|^{2} = 2\|A\|^{2} - 2\Re\Tr( \AAdag W^{\dag}U),
\end{equation} 
defined as a function on the unitary group: $J_{F}:\UH\to\mathbb{R}$, where $\AAdag:=AA^{\dag}$ is Hermitian and positive semi-definite.  In contrast to the landscape $J_{P}$ that will be the subject of the next section, the value of $J_{F}(U)$ depends upon the global phase of $U$.  However, if $ \AAdag $ has a null space, then $J_{F}(U)$ is invariant to the action of $U$ on that null space.  Generally speaking, the parameter operator $A\in\BH$ may be chosen to vary the relative weights of different parts of the unitary operator.  This freedom allows for the possibility that $A$ is a projection operator as discussed in \cite{Palao2003}.

\subsection{Critical Point Identification}
The operator $ \AAdag $ induces a natural orthogonal decomposition of $\mathcal{H}$ into the eigenspaces of $ \AAdag $, $\mathcal{H} = \bigoplus_{i=0}^{\kappa}\ES_{i}$, under which $ \AAdag = \bigoplus_{i=0}^{\kappa}\omega_{i}^{2}\identity_{\ES_{i}}$ where $0 = \omega_{0}^{2}<\omega_{1}^{2}<\dots <\omega_{\kappa}^{2}$ are the eigenvalues of $ \AAdag $.  Note that $\omega_{0}^{2}=0$ is a special case, and that $\ES_{0}$, the null space of $ \AAdag $, may be a trivial (zero-dimensional) subspace, while the other $\ES_{i}$ are assumed to be nontrivial.  This is done because of the special significance of $\ES_{0}$ in the analysis that follows.  The dimension of subspace $\ES_{i}$ (i.e. the multiplicity of eigenvalue $\omega_{i}^{2}$) will be denoted by $n_{i}$. 

The remainder of Section \ref{sec:frobkinematic} will be concerned with proving the results encapsulated in the following theorem and illustrating them with examples.
\begin{theorem}
	The kinematic objective $J_{F}$ is a Morse-Bott function on $\UH$ exhibiting $M = \prod_{i=1}^{\kappa}(n_{i}+1)$ connected critical submanifolds, each of which is isomorphic to a product of Grassmann manifolds and a unitary group:
\begin{subequations}
\begin{align}
	&\Crit(J_{F})\nonumber\\
	& = \bigsqcup_{\jmdstack{0\leq \nu_{i}\leq n_{i}}{\forall i=1,\dots,\kappa}} \Big\{U = W\bigoplus_{i=0}^{\kappa} \mathcal{X}_{i}\;:\; \mathcal{X}_{i}\in\mathrm{U}(\ES_{i}) \text{ for } 0\leq i\leq \kappa,\nonumber\\
	& \qquad \qquad \qquad \text{ and }\mathcal{X}_{i}^{2} = \identity_{\ES_{i}} \text{ and }\nonumber\\
	& \qquad \qquad \qquad \Tr(\mathcal{X}_{i}) = n_{i}-2\nu_{i} \text{ for } 1\leq i\leq \kappa\Big\}\\
	& \simeq \bigsqcup_{\jmdstack{0\leq \nu_{i}\leq n_{i}}{\forall i=1,\dots,\kappa}}\mathrm{U}(\ES_{0})\oplus\GrES{\nu_{1}}{1}\oplus\cdots\oplus\GrES{\nu_{\kappa}}{\kappa},
\end{align}
\end{subequations}
where $\GrES{\nu_{i}}{i}$ denotes the Grassmann manifold of all $\nu_{i}$-dimensional linear subspaces of $\ES_{i}$.  The critical submanifold described by a particular set of indices $(\nu_{1},\dots,\nu_{\kappa})$ corresponds to a critical value $J_{F} = 4\sum_{i=1}^{\kappa}\omega_{i}^{2}\nu_{i}$ and has dimension 
\begin{equation}
	\mathcal{N}_{0} = n_{0}^{2} + 2\sum_{i=1}^{\kappa}\nu_{i}(n_{i}-\nu_{i}),
\end{equation}
while the ranks of the negative and positive Hessian eigenbundles (i.e. the numbers of negative and positive Hessian eigenvalues) on this submanifold are 
\begin{align}
	\mathcal{N}_{-} & = \sum_{i=1}^{\kappa}\nu_{i}^{2} + 2\sum_{0\leq i<j}^{\kappa}n_{i}\nu_{j}\\
	\mathcal{N}_{+} & = \sum_{i=1}^{\kappa}(n_{i}-\nu_{i})^{2} + 2\sum_{0\leq i<j}^{\kappa}n_{i}(n_{j}-\nu_{j}).
\end{align}
Of these critical submanifolds, exactly one (corresponding to $\nu_{i} = n_{i}$ for all $i=1,\dots, \kappa$) is the set of global maxima and one (corresponding to $\nu_{i} = 0$ for all $i=1,\dots, \kappa$) is the set of global minima, both being isomorphic to $\mathrm{U}(n_{0})$.  The remaining $M-2$ critical submanifolds are all saddles, so that $J_{F}$ admits no local traps.
\label{thm:FrobeniusSummary}
\end{theorem}

The differential of $J_{F}$, $\rmd_{U}J_{F}:\rmT_{U}\UH\rightarrow \mathbb{R}$, is given by 
\begin{align}
	\rmd_{U}J_{F}(\delta U) & = -2\Re\Tr( \AAdag W^{\dag}\delta U) \nonumber\\
	& = \langle U \AAdag W^{\dag}U-W \AAdag , \delta U\rangle,
\end{align}
where the last step includes a projection of $-2W \AAdag $ into the tangent space $\rmT_{U}\UH$. Therefore, 
\begin{equation}
	\grad J_{F}(U)  = U \AAdag W^{\dag}U-W \AAdag
\end{equation}
and a critical point of $J_{F}$ is a $U$ such that $U \AAdag W^{\dag}U = W \AAdag $.  Let $\mathcal{X} = W^{\dag}U\in\UH$.  Then the critical point condition becomes $\mathcal{X}^{\dag} \AAdag = \AAdag \mathcal{X}$.  This result also implies that $\mathcal{X} \AAdagSq =  \AAdagSq \mathcal{X}$, since $\mathcal{X}^{\dag} \AAdagSq \mathcal{X} = (\mathcal{X}^{\dag} \AAdag )( \AAdag \mathcal{X}) = ( \AAdag \mathcal{X})(\mathcal{X}^{\dag} \AAdag ) = \AAdagSq $.  Because the eigendecomposition of $\mathcal{H}$ induced by $ \AAdagSq $ is the same as for $ \AAdag $, $\mathcal{H} =\bigoplus_{i=0}^{\kappa}\ES_{i}$, the fact that $\mathcal{X}$ commutes with $ \AAdagSq $ implies that $\mathcal{X} = \bigoplus_{i=1}^{\kappa}\mathcal{X}_{i}$ with $\mathcal{X}_{i}\in\mathrm{U}(\ES_{i})$ for each $i$.  Then $\mathcal{X}^{\dag} \AAdag = \AAdag \mathcal{X}$ implies that $\omega_{i}^{2}\mathcal{X}_{i}^{\dag} = \omega_{i}^{2}\mathcal{X}_{i}$ for each $i=0,\dots,\kappa$, so that for $i=1,\dots,\kappa$, $\mathcal{X}_{i} = \mathcal{X}_{i}^{\dag}$ is both unitary and Hermitian, and therefore has eigenvalues $\pm 1$.  In other words, each $\mathcal{X}_{i}$ is a unitary involution: $\mathcal{X}_{i}^{2} = \identity_{\ES_{i}}$.  Consequently, for $i=1,\dots,\kappa$, there exists a further orthogonal decomposition of $\ES_{i}$ into the positive and negative eigenspaces of $\mathcal{X}_{i}$, i.e. $\ES_{i} = \ES_{i}^{-}\oplus \ES_{i}^{+}$, with respect to which $\mathcal{X}_{i} = -\identity_{\ES_{i}^{-}}\oplus \identity_{\ES_{i}^{+}}$.  The dimension of $\ES_{i}^{-}$ will be denoted $\nu_{i}$, and the dimension of $\ES_{i}^{+}$ is then $n_{i}-\nu_{i}$.  This is equivalent to the statement that $\Tr(\mathcal{X}_{i}) = n_{i}-2\nu_{i}$.

As a result, the set of critical points of $J_{F}$ is 
given by
\begin{subequations}
\begin{align}
	\Crit(J_{F}) & = \Big\{U = W\bigoplus_{i=0}^{\kappa}\mathcal{X}_{i} \;:\; \mathcal{X}_{i}\in\mathrm{U}(\ES_{i})\text{ for } 0\leq i\leq\kappa\nonumber\\
	& \qquad  \text{ and } \mathcal{X}_{i}^{2} = \identity_{\ES_{i}} \text{ for } 1\leq i\leq\kappa\Big\}\\
	& = \bigsqcup_{\jmdstack{0\leq \nu_{i}\leq n_{i}}{\forall i=1,\dots,\kappa}} C_{\{\nu_{i}\}}
\end{align}	
\end{subequations}
where the critical submanifold $C_{\{\nu_{i}\}}$ is isomorphic to 
\begin{equation}
	\mathrm{U}(\ES_{0})\oplus\GrES{\nu_{1}}{1}\oplus\cdots\oplus\GrES{\nu_{\kappa}}{\kappa}
\end{equation}
and where $\mathcal{X}_{i}\in \mathrm{U}(\ES_{i})$ such that $\mathcal{X}_{i}^{2} = \identity_{\ES_{i}}$ is uniquely identified by its $-1$ eigenspace $\ES_{i}^{-}$, so the space of all such $\mathcal{X}_{i}$ with $\nu_{i}$-dimensional $-1$ eigenspace is isomorphic to the space of all $\nu_{i}$-dimensional subspaces of $\ES_{i}$, which is the Grasmannian 
\begin{equation}
	\GrES{\nu_{i}}{i}\simeq \frac{\mathrm{U}(\ES_{i})}{\mathrm{U}(\ES_{i}^{-})\oplus\mathrm{U}(\ES_{i}^{+})}.
\end{equation}
For each $\nu_{i}=0,\dots,n_{i}$, the Grassmannian of admissible $\mathcal{X}_{i}$ forms a connected submanifold of $\mathrm{U}(\ES_{i})$, and since the traces of the $\mathcal{X}_{i}$ corresponding to different $\nu_{i}$ are different [$\Tr(\mathcal{X}_{i}) = n_{i}-2\nu_{i}$] the Grasmannians on $\ES_{i}$ corresponding to different $\nu_{i}$ are disconnected.  With $n_{i}+1$ disjoint choices for each $i=1,\dots,\kappa$, it is clear that $J_{F}$ admits exactly $M = \prod_{i=1}^{\kappa}(n_{i}+1)$ connected critical submanifolds.

\subsection{Hessian Analysis}
\label{sec:JFHessian}
Turning to the question of the signatures of these critical points, we extend the gradient vector field $\grad J_{F}$ in the obvious way to all of $\BH$ and differentiate to find $\rmd_{U}\grad J_{F}(\delta U) = \delta U\, \AAdag W^{\dag}U + U \AAdag W^{\dag}\delta U$.  Projecting this onto the tangent bundle of $\UH$ gives the Hessian operator at $U$, $\Hess_{J_{F},U}\in\mathcal{B}\big(\rmT_{U}\UH\big)$,
\begin{subequations}
\begin{align}
	\Hess_{J_{F},U}(\delta U) & := \nabla_{\delta U}\grad J_{F}(U)\\
	& = \frac{1}{2}\big(\delta U \AAdag W^{\dag}U + W \AAdag U^{\dag}\delta U \nonumber\\
	& \qquad + U \AAdag W^{\dag}\delta U + \delta U\,U^{\dag}W \AAdag \big),
\end{align}
\end{subequations}
where $\nabla_{\delta U}$ denotes the covariant derivative in the direction $\delta U$ \cite{doCarmo1992}, and where we have used the fact that any tangent vector $\delta U\in \rmT_{U}\UH$ satisfies $\delta U^{\dag} = - U^{\dag}\delta U\,U^{\dag}$.  At a critical point, $\grad J_{F}(U) = 0$, so that $W^{\dag}U$ and $U^{\dag}W$ both commute with $ \AAdag $, and also $W^{\dag}U \AAdag = U^{\dag}W \AAdag $.  Then the Hessian becomes
\begin{equation}	
	\Hess_{J_{F},U}(\delta U) = \delta U \AAdag W^{\dag}U + U \AAdag W^{\dag}\delta U.
\end{equation}
Suppose that $U\in\UH$ is a critical point of $J_{F}$, let $\mathcal{X} = W^{\dag}U$ as before, and let $Y = U^{\dag}\delta U\in\uH$, where $\uH$ denotes the Lie algebra of skew-Hermitian operators on $\mathcal{H}$.  Then $\delta U$ is an eigenvector of $\Hess_{J_{F},U}$, i.e. $\Hess_{J_{F},U}(\delta U) = \lambda \delta U$, if and only if 
\begin{equation}
	Y \AAdag \mathcal{X} + \AAdag \mathcal{X} Y = \lambda Y\label{eqn:JFskewHermHessian}.
\end{equation}

We will use again the decomposition $\mathcal{H}=\bigoplus_{i=0}^{\kappa}\ES_{i}$ into eigenspaces of $ \AAdag $ and the further decomposition at a critical point $U=W\bigoplus\mathcal{X}_{i}$ of $\ES_{i} = \ES_{i}^{-}\oplus \ES_{i}^{+}$ into eigenspaces of $\mathcal{X}_{i}$.  The space $\uH$ may similarly be decomposed into subspaces of skew-Hermitian operators supported on ``diagonal'' and ``off-diagonal'' blocks
\begin{align}
	\uH & = \mathrm{u}(\ES_{0})\oplus\bigoplus_{i=1}^{\kappa}\mathrm{u}(\ES_{i}^{-})\oplus\mathrm{u}(\ES_{i}^{+})\oplus\frac{\mathrm{u}(\ES_{i})}{\mathrm{u}(\ES_{i}^{-})\oplus\mathrm{u}(\ES_{i}^{+})}\nonumber\\
	& \qquad \oplus\bigoplus_{0\leq i< j}^{\kappa}\bigoplus_{\jmdstack{s_{i}=\pm}{s_{j}=\pm}}\frac{\mathrm{u}(\ES_{i}^{s_{i}}\oplus \ES_{j}^{s_{j}})}{\mathrm{u}(\ES_{i}^{s_{i}})\oplus \mathrm{u}(\ES_{j}^{s_{j}})},
\end{align}
where the sum over $s_{i}$ is neglected for $i=0$ since $\ES_{0}$ is not decomposed into $\pm$ subspaces.
It is straightforward to see that the elements of these identified subspaces are eigenvectors of \eqref{eqn:JFskewHermHessian} with eigenvalues as in Table \ref{tab:JFHessSubspaces}.
\begin{table}
	\centering
	\caption{Eigenvalues and eigenspaces of the operator $\mathcal{L}_{U^{\dag}}\circ\Hess_{J_{F},U}\circ\mathcal{L}_{U}\in\mathcal{B}\big(\uH\big)$ described in \eqref{eqn:JFskewHermHessian}, where $\mathcal{L}_{U}$ denotes left multiplication by $U$.}
	\setlength{\extrarowheight}{3pt}
	\begin{tabular}{l|l}
	Subspace of $\mathrm{u}(\mathcal{H})$ & Eigenvalue $\lambda$\\
	\hline
	$\mathrm{u}(\ES_{0})$ & 0\\
	$\mathrm{u}(\ES_{i}^{\pm})$ & $\pm 2\omega_{i}^{2}$\\
	$\mathrm{u}(\ES_{i})/\big[\mathrm{u}(\ES_{i}^{-})\oplus\mathrm{u}(\ES_{i}^{+})\big]$ & 0\\
	$\mathrm{u}(\ES_{i}^{s_{i}}\oplus \ES_{j}^{s_{j}})/\big[\mathrm{u}(\ES_{i}^{s_{i}})\oplus \mathrm{u}(\ES_{j}^{s_{j}})\big]$ & $s_{i}\omega_{i}^{2} + s_{j}\omega_{j}^{2}$
	\end{tabular}
	\setlength{\extrarowheight}{0pt}
	\label{tab:JFHessSubspaces}
\end{table}

Notice that for orthogonal subspaces $Q$ and $R$ of dimension $q$ and $r$, respectively, $\dim[\mathrm{u}(Q)]= q^{2}$ and $\dim\big[\mathrm{u}(Q\oplus R)/\big(\mathrm{u}(Q)\oplus\mathrm{u}(R)\big)\big] = (q+r)^{2} - q^{2} - r^{2} = 2qr$.  Then the Hessian null space is 
\begin{equation}
	E_{0} = U\left(\mathrm{u}(\ES_{0})\oplus\bigoplus_{i=1}^{\kappa}\frac{\mathrm{u}(\ES_{i})}{\mathrm{u}(\ES_{i}^{-})\oplus\mathrm{u}(\ES_{i}^{+})}\right)
\end{equation}
which is identical to the tangent space of the associated critical submanifold, and consequently has the same dimension 
\begin{equation}
	\mathcal{N}_{0} = n_{0}^{2} + 2\sum_{i=1}^{\kappa}\nu_{i}(n_{i}-\nu_{i}).
\end{equation}
Therefore $J_{F}$ is a Morse-Bott function for all $A$ matrices.  The negative Hessian eigenspace (i.e. the negative Hessian eigenbundle at $U$, see Figure \ref{fig:vecBund}), spanned by the Hessian eigenspaces with strictly negative eigenvalues, is
\begin{equation}
	E_{-} = U\left(\bigoplus_{i=1}^{\kappa}\mathrm{u}(\ES_{i}^{-})\oplus\bigoplus_{0\leq i<j}^{\kappa}\frac{\mathrm{u}(\ES_{i}\oplus \ES_{j}^{-})}{\mathrm{u}(\ES_{i})\oplus\mathrm{u}(\ES_{j}^{-})}\right)
\end{equation} 
which has dimension
\begin{equation}
	\mathcal{N}_{-} = \sum_{i=1}^{\kappa}\nu_{i}^{2} + 2\sum_{0\leq i<j}^{\kappa}n_{i}\nu_{j}.\end{equation}
Finally, the positive Hessian eigenspace is
\begin{equation}
	E_{+} = U\left(\bigoplus_{i=1}^{\kappa}\mathrm{u}(\ES_{i}^{+})\oplus\bigoplus_{0\leq i<j}^{\kappa}\frac{\mathrm{u}(\ES_{i}\oplus \ES_{j}^{+})}{\mathrm{u}(\ES_{i})\oplus\mathrm{u}(\ES_{j}^{+})}\right)
\end{equation} 
which has dimension
\begin{equation}
	\mathcal{N}_{+} = \sum_{i=1}^{\kappa}(n_{i}-\nu_{i})^{2} + 2\sum_{0\leq i<j}^{\kappa}n_{i}(n_{j}-\nu_{j}).
\end{equation}

It is easy to see that $\mathcal{N}_{0} + \mathcal{N}_{-} + \mathcal{N}_{+} = N^{2} = \dim\big(\UH\big)$ as expected.  Furthermore, we find that $\mathcal{N}_{+} = 0$ if and only if $\nu_{i} = n_{i}$ for all $i = 1,\dots,\kappa$, i.e. only at the global maximum $\{U=-W\big(\mathcal{X}_{0}\oplus\identity_{\mathcal{H}/\ES_{0}}\big)\;:\;\mathcal{X}_{0}\in\mathrm{U}(\ES_{0})\}$.  Likewise $\mathcal{N}_{-} = 0$ if and only if $\nu_{i} = 0$ for all $i = 1,\dots,\kappa$, i.e. only at the global minimum $\{U=W\big(\mathcal{X}_{0}\oplus\identity_{\mathcal{H}/\ES_{0}}\big)\;:\;\mathcal{X}_{0}\in\mathrm{U}(\ES_{0})\}$.  So, there are no local traps in the kinematic landscape, and the remaining $\prod_{i=1}^{\kappa}(n_{i}+1) - 2$ critical submanifolds are all saddles.

\subsection{Examples}
\begin{example}[$ \AAdag $ is a projection]
Suppose that $ \AAdag = AA^{\dag}$ is a projection, or equivalently, that $A$ is a partial isometry.  Then $\mathcal{H} = \ES_{0}\oplus \ES_{1}$, $\omega_{0}^{2} = 0$, and $\omega_{1}^{2} = 1$.  By Theorem \ref{thm:FrobeniusSummary}, $J_{F}$ admits exactly $n_{1}+1$ critical submanifolds, each isomorphic to $\mathrm{U}(\ES_{0})\oplus \GrES{\nu_{1}}{1}$ and having dimension $\mathcal{N}_{0} = n_{0}^{2} + 2\nu_{1}(n_{1}-\nu_{1})$ for $\nu_{1}=0\,\dots, n_{1}$.  The critical submanifold identified by index $\nu_{1}$ has critical value $J_{F} = 4\nu_{1}$ and the Hessian has eigenvalues in the set $\{-2,-1,0,+1,+2\}$. At a critical point $U\in C_{\nu_{1}}$, the negative Hessian eigenspace 
\begin{equation}
	E_{-} = E_{-2}\oplus E_{-1} = U\left[\mathrm{u}(\ES_{1}^{-})\oplus \frac{\mathrm{u}(\ES_{0}\oplus \ES_{1}^{-})}{\mathrm{u}(\ES_{0})\oplus \mathrm{u}(\ES_{1}^{-})}\right]
\end{equation}
is of dimension $\mathcal{N}_{-}=\nu_{1}^{2} + 2n_{0}\nu_{1}$, and the positive Hessian eigenspace
\begin{equation}
	E_{+} = E_{+2} \oplus E_{+1} = U\left[\mathrm{u}(\ES_{1}^{+})\oplus \frac{\mathrm{u}(\ES_{0}\oplus \ES_{1}^{+})}{\mathrm{u}(\ES_{0})\oplus \mathrm{u}(\ES_{1}^{+})}\right]
\end{equation}
is of dimension $\mathcal{N}_{+} = (n_{1}-\nu_{1})^{2} + 2n_{0}(n_{1}-\nu_{1})$.  The full set of critical values for this problem is $\{0,4,8,\dots, 4n_{1}\}$.
\end{example}

\begin{example}[$ \AAdag $ is nondegenerate and nonsingular]
Suppose that $ \AAdag $ is nondegenerate and nonsingular, so that $\mathcal{H} = \bigoplus_{i=1}^{N} \ES_{i}$, $n_{i}=1$ for all $i=1,\dots, N$, and $\omega_{1}^{2} < \dots < \omega_{N}^{2}$.  By Theorem \ref{thm:FrobeniusSummary}, the critical set of $J_{F}$ comprises exactly $2^{N}$ isolated critical points (i.e. zero-dimensional critical submanifolds).  For the critical point identified by indices $(\nu_{1},\dots,\nu_{N})$, the critical value is $J_{F} = 4\sum_{i=1}^{N}\nu_{i}\omega_{i}^{2}$, the negative Hessian eigenbundle has dimension $\mathcal{N}_{-} = \sum_{i=1}^{N}(2i-1)\nu_{i}$, and the positive eigenbundle has dimension $\mathcal{N}_{+} = N^{2} - \sum_{i=1}^{N}(2i-1)\nu_{i}$.
\end{example}

\bigskip

\section{Kinematic Critical Point Analysis of Certain Phase-Invariant Landscapes}\label{sec:phaseInvKinematic}
We now turn our attention to the kinematic landscape 
\begin{equation}
	J_{P}(U) := \|A\|^{4} - |\Tr( \AAdag W^{\dag}U)|^{2}.
\end{equation}
This function is phase-invariant, meaning that $J_{P}(e^{i\theta}U) = J_{P}(U)$ for any $\theta\in\mathbb{R}$.  Since the global phase of a state vector $|\psi\rangle\in\mathcal{H}$ has no physical meaning, neither does the global phase of the unitary propagator, so that $U$ and $e^{i\theta}U$ are functionally equivalent.  A phase-invariant objective function such as $J_{P}$ which treats such equivalent operators as equally optimal may therefore be desirable as it may be expected to require optimization only with respect to the degrees of freedom that are physically relevant.

\begin{figure}
	\begin{equation}
		\xymatrixcolsep{0pt}\xymatrix{ \mathrm{U}(1) \ar[rrrrr]^{\phase\mapsto \phase\identity} & & & & &\UH\ar[rd]^{\pi} & & \SUH \ar[ld]_{p} & & & & & & & & & \mathbb{Z}/N\mathbb{Z} \ar[lllllllll]_{e^{i2\pi k/N}\identity\mapsfrom k}\\  & & & & & & \PUH}\nonumber
	\end{equation}
	\caption{$\UH$ and $\SUH$ as fibre bundles over $\PUH$.  The two compositions of maps depicted here are exact, i.e. each composition $f\circ g$ is such that $\ker(f) = \im(g)$.  Since $\mathrm{U}(1)\identity\subset \UH$ is the subgroup of global phase rotations, $\PUH$ may be thought of as the unitary group modulo global phase: two unitary operators $U,W\in \UH$ will be mapped by $\pi$ to the same operator in $\PUH$ if and only if $W^{\dagger}U\in\mathrm{U}(1)\identity$, i.e., if and only if $U = e^{i\theta}W$ for some $\theta\in\mathbb{R}$.}
	\label{fig:commdiag}
\end{figure}

The remainder of Section \ref{sec:phaseInvKinematic} will be concerned with proving the following theorem.
\begin{theorem}
	The critical set of the kinematic objective function $J_{P}$ comprises a global maximum set and $M$ connected nondegenerate critical submanifolds.  The global maximum set need not globally be a submanifold of $\UH$, but away from self-intersection points is a codimension 2 submanifold of $\UH$.  The remaining critical submanifolds are of the form
\begin{align}
	C_{\{\nu_{i}\}} & = \Big\{U = \phase WZ\,:\,\phase \in\mathrm{U}(1), Z\in\bigoplus_{i=0}^{\kappa}\mathrm{U}(\ES_{i}), \text{ and }\nonumber\\
	& \quad Z_{i}^{2} = \identity_{\ES_{i}} \text{ with } \Tr(Z_{i}) = n_{i}-2\nu_{i} \text{ for }i\geq 1\Big\}\\
	& \simeq \mathrm{U}(\ES_{0})\oplus \left[U(1)\times \bigoplus_{i=1}^{\kappa}\GrES{\nu_{i}}{i}\right],
\end{align} 
where $0\leq \nu_{i}\leq n_{i}$ are such that $\sum_{i}(n_{i}-2\nu_{i})\omega_{i}^{2} > 0$.  The number, $M$, of these critical submanifolds is equal to the number of choices of these integers $\{\nu_{i}\}$ satisfying the above two conditions and therefore depends on the singular values $\{\omega_{i}\}$ of the parameter operator $A$.  The critical submanifold $C_{\{\nu_{i}\}}$ described by a particular set of indices $(\nu_{1},\dots,\nu_{\kappa})$ corresponds to a critical value of 
\begin{equation} 
	J_{P} = 4\left(\sum_{i=1}^{\kappa}\nu_{i}\omega_{i}^{2}\right)\left(\sum_{i=1}^{\kappa}(n_{i}-\nu_{i})\omega_{i}^{2}\right)
\end{equation}
and has dimension 
\begin{equation}
	\mathcal{N}_{0} = 1+ n_{0}^{2} + 2\sum_{i=1}^{\kappa}\nu_{i}(n_{i}-\nu_{i}),
\end{equation}
while the dimensions of the negative and positive Hessian eigenbundles on this submanifold are 
\begin{align}
	\mathcal{N}_{-} & = \sum_{i=1}^{\kappa}\nu_{i}^{2} + 2\sum_{0\leq i<j}^{\kappa}n_{i}\nu_{j}\\
	\mathcal{N}_{+} & = -1 + \sum_{i=1}^{\kappa}(n_{i}-\nu_{i})^{2} + 2\sum_{0\leq i<j}^{\kappa}n_{i}(n_{j}-\nu_{j}).
\end{align}
Consequently, of these submanifolds $C_{\{\nu_{i}\}}$, exactly one (corresponding to the case $\nu_{i}=0$ for all $i=1,\dots,\kappa$) is the set of global minima $\{U=\phase W(Z_{0}\oplus\identity_{\mathcal{H}/\ES_{0}})\;:\; \phase \in\mathrm{U}(1) \text{ and } Z_{0}\in\mathrm{U}(\ES_{0})\}$, and the remaining $M-1$ critical submanifolds are all saddles, so that $J_{P}$ admits no local traps.  For an open, dense set of $A$ operators in $\BH$, the global maximum set of $J_{P}$ is a nondegenerate submanifold of $\UH$, in which case $J_{P}$ is a Morse-Bott function.
\label{thm:phaseInvariantSummary}
\end{theorem}

\subsection{Distance Metric on \texorpdfstring{$\PUH$}{PU(H)}}
There are various ways of deriving a phase-invariant landscape like $J_{P}$ from one that is phase-dependent like $J_{F}$.  A simple approach is to observe that 
\begin{equation}
	\min_{\phase\in\mathrm{U}(1)}\|(U - \phase W)A\|^{2} = 2\|A\|^{2} - 2|\Tr( \AAdag W^{\dag}U)|.
\end{equation}
  This provides a means to define a quotient metric on the projective unitary group $\PUH$ (see Figure \ref{fig:commdiag}) from the metric $d(U,W) = \|(U-W)A\|^{2}$ on $\UH$.  

Another approach involves the adjoint representation of the unitary group, $\Ad:\UH\rightarrow \Aut\big(\uH\big)\subset \mathrm{GL}\big(\uH\big)\cong \mathrm{GL}(N^{2}; \mathbb{R})$, which is given by $\Ad(U)A = U A U^{\dag}$ for any $A\in \uH$, and where $\Aut\big(\uH\big)$ is the group of Lie algebra automorphisms on $\uH$ \cite{Warner1983, Knapp2004}.  With $\uH$ given the Hilbert-Schmidt inner product, $\langle \Ad(U)A, \Ad(U)B\rangle = \langle U A U^{\dag}, U B U^{\dag}\rangle  = \langle A, B\rangle$, so that for each $U\in\UH$, $\Ad(U)$ is an orthogonal operator on $\uH$, i.e. $\Ad:\UH\to\mathrm{SO}(\uH)$.  Furthermore, if $W$ and $U$ differ only by a global phase, i.e. $W = \phase U$, then $\Ad(U) = \Ad(W)$.  Moreover, the kernel of $\Ad$, i.e. $\Ad^{-1}(\rm{id})$, is the center of $\UH$ \cite[Cor. 5.2, pg. 129]{Helgason2001}\cite[Thm. 3.50]{Warner1983} which is $Z\big(\UH\big) = \{\phase \identity\} = \mathrm{U}(1)\identity$.  Then $\Ad(U)= \Ad(W)$ if and only if $W = \phase U$, so the image of $\Ad$ is a faithful representation of $\PUH$ and $\Ad$ may be thought of as playing a role similar to the projection $\pi:\UH\to\PUH$.

Consider some target $W\in\UH$ and some $B\in \mathrm{GL}\big(\uH)$ and define 
\begin{equation}
	J(U) = \frac{1}{2}\|(\Ad(U)-\Ad(W))\circ B\|_{\mathrm{HS}}^{2},
\end{equation}
where $\|\cdot\|_{\mathrm{HS}}$ is the Hilbert-Schmidt norm on $\End\big(\uH\big)\cong \mathbb{R}^{N^{2}\times N^{2}}$, the space of all linear operators acting on $\uH$.  Then, 
\begin{align}
	J(U) & = \Tr(B^{*} B) - \Tr\big(B^{*}\Ad(W)^{*}\Ad(U) B\big). \label{eqn:genAdjointDist}
\end{align}
Let $A\in \BH$ be an arbitrary linear operator on $\mathcal{H}$, and let $B$ be defined by $B(\Omega) = A\Omega A^{\dag}$.  It follows from \eqref{eqn:genAdjointDist} that the kinematic landscape $J_{P}(U) = \|A\|^{4} - |\Tr( \AAdag W^{\dag}U)|^{2}$ on $\UH$ is equivalent to the weighted Hilbert-Schmidt distance function on the subgroup of $\mathrm{SO}\big(\uH\big)$ given by $\mathrm{Im}(\Ad)\simeq \PUH$.  In other words, $J_{P}$ is completely equivalent to $J_{F}$, but applied to $\PUH$, rather than $\UH$.

\subsection{Critical Point Identification}
Now, the differential of $J_{P}$ at $U\in\UH$, $\rmd_{U}J_{P}:\rmT_{U}\UH\rightarrow \mathbb{R}$ is given by 
\begin{subequations}
\begin{align}
	\rmd_{U}J_{P}(\delta U) & = -\Tr( \AAdag W^{\dag}\delta U)\Tr(U^{\dag}W \AAdag )\nonumber\\
	& \qquad  - \Tr( \AAdag W^{\dag}U)\Tr(\delta U^{\dag}W \AAdag )\\
	& = \big\langle\Tr(U^{\dag}W \AAdag )U \AAdag W^{\dag}U\nonumber\\
	& \qquad \qquad  - \Tr( \AAdag W^{\dag}U)W \AAdag , \delta U \big\rangle
\end{align}
\end{subequations}
so that 
\begin{equation}
	\grad J_{P}(U) = \Tr(U^{\dag}W \AAdag )U \AAdag W^{\dag}U - \Tr( \AAdag W^{\dag}U)W \AAdag .\label{eq:gradJP}
\end{equation}
For $J_{P}(U)<\|A\|^{4}$, $\Tr(U^{\dag}W \AAdag )\neq 0$, so $\grad J_{P}(U) = 0$ if and only if $ \AAdag Z = Z^{\dag} \AAdag $, where 
\begin{equation}
	Z = \frac{\Tr(U^{\dag}W \AAdag )}{|\Tr(U^{\dag}W \AAdag )|}W^{\dag}U\in \UH.
\end{equation}
This same condition was considered in section \ref{sec:frobkinematic} (and \cite{Ho2009}), where it was shown to imply that, under the orthogonal decomposition $\mathcal{H}=\bigoplus_{i=0}^{\kappa}\ES_{i}$ of $\mathcal{H}$ into the eigenspaces of dimensions $\{n_{i}\}$ of $ \AAdag $, $Z = \bigoplus_{i=0}^{\kappa}Z_{i}$ with $Z_{i}\in\mathrm{U}(\ES_{i})$ for $i=0,\dots,\kappa$ and $Z_{i}^{2} = \identity_{\ES_{i}}$ for $i\geq 1$.  Since the involutions $Z_{i}$ for $i\geq 1$ have eigenvalues $\pm1$, they induce a further orthogonal decomposition of $\ES_{i}$ into $\ES_{i} = \ES_{i}^{-}\oplus \ES_{i}^{+}$ into the $\pm 1$ eigenspaces of $Z_{i}$ of dimensions $\nu_{i}$ and $n_{i}-\nu_{i}$, respectively.

Drawing on the material above, we find that any critical point $U$ of $J_{P}$ with $J_{P}(U)<\|A\|^{4}$ can be written as 
\begin{equation}
	U = \frac{\Tr( \AAdag W^{\dag}U)}{|\Tr(U^{\dag}W \AAdag )|}WZ
\end{equation}
with $Z = \bigoplus_{i=0}^{\kappa} Z_{i}$ and $Z_{i}^{2} = \identity_{\ES_{i}}$ for $i\geq 1$.  This characterization is complicated by the presence of $U$ on both sides of the equation, especially with regard to the phase factor on the right hand side.  However, it may be observed for any $U = \phase WZ$ with $\phase\in\mathrm{U}(1)$, $Z\in\bigoplus_{i=0}^{\kappa}\mathrm{U}(\ES_{i})$, and $Z_{i}^{2}=\identity_{\ES_{i}}$ for $i\geq 1$, that $[U^{\dag}W, \AAdag ] = [W^{\dag}U, \AAdag ] = 0$ and $U^{\dag}W \AAdag = ( \AAdag W^{\dag}U)^{\dag} = \phase^{-2} \AAdag W^{\dag}U$, so that $\grad J_{P}(U) = 0$.  Hence, \emph{every} such $U$ is a critical point of $J_{P}$, and they comprise connected critical sets 
\begin{align}
	C_{\{\nu_{i}\}} & := \Big\{U = \phase WZ\,:\,\phase \in\mathrm{U}(1), Z\in\bigoplus_{i=0}^{\kappa}\mathrm{U}(\ES_{i}), \text{ and }\nonumber\\
	& \qquad Z_{i}^{2} = \identity_{\ES_{i}} \text{ with } \dim(\ES_{i}^{-}) = \nu_{i}\text{ for } i\geq 1\Big\}
\end{align}
for all $0\leq \nu_{i}\leq n_{i}$.  However, it may be observed that for any such set of indices $\{\nu_{i}\}$, $C_{\{\nu_{i}\}} = C_{\{n_{i}-\nu_{i}\}}$, since $U = \phase WZ\in C_{\{\nu_{i}\}}$ if and only if $U = (-\phase )W(-Z)\in C_{\{n_{i}-\nu_{i}\}}$.  It suffices then to only consider $C_{\{\nu_{i}\}}$ for which $\Tr( \AAdag Z) = \sum_{i=1}^{\kappa}\omega_{i}^{2}(n_{i}-2\nu_{i})>0$ to avoid identifying the same critical submanifold twice.  Such a critical submanifold $C_{\{\nu_{i}\}}$ has the critical value 
\begin{subequations}
\begin{align}
	J_{P}(U) & = \big(\Tr\big( \AAdag \big)\big)^{2} - |\Tr\big( \AAdag W^{\dag}U\big)|^{2}\\
	& = \left(\sum_{i=1}^{\kappa}n_{i}\omega_{i}^{2}\right)^{2} - \left(\sum_{i=1}^{\kappa}(n_{i}-2\nu_{i})\omega_{i}^{2}\right)^{2}\\
	& = 4\left(\sum_{i=1}^{\kappa}\nu_{i}\omega_{i}^{2}\right)\left(\sum_{i=1}^{\kappa}(n_{i}-\nu_{i})\omega_{i}^{2}\right)
\end{align}
\end{subequations}
for every $U\in C_{\{\nu_{i}\}}$.

\subsection{Hessian Analysis}
Given the form of the gradient of $J_{P}$ in \eqref{eq:gradJP}, by again extending the gradient vector field to $\BH$ and differentiating, it is found that
\begin{align}
	&\rmd_{U}\grad J_{P}(\delta U)\nonumber\\
	& = \Tr(\delta U^{\dag}W \AAdag )U \AAdag W^{\dag}U  + \Tr(U^{\dag}W \AAdag )\delta U \AAdag W^{\dag}U\nonumber\\
	& \qquad + \Tr(U^{\dag}W \AAdag )U \AAdag W^{\dag}\delta U - \Tr( \AAdag W^{\dag}\delta U)W \AAdag ,
\end{align}
whence, by projection onto the tangent bundle of $\UH$,
\begin{align}
	& \Hess_{J_{P},U}(\delta U) = \nabla_{\delta U}\grad J_{P}\nonumber\\
	& = -\Tr( \AAdag W^{\dag}\delta U)W \AAdag + \Tr(\delta U^{\dag}W \AAdag )U \AAdag W^{\dag}U\nonumber\\
	& \quad  + \!\frac{1}{2}\Big\{\Tr(U^{\dag}W \AAdag )\delta U \AAdag W^{\dag}U \!+\! \Tr( \AAdag W^{\dag}U) W \AAdag U^{\dag}\delta U\nonumber\\
	& \quad \; + \Tr(U^{\dag}W \AAdag )U \AAdag W^{\dag}\delta U + \Tr( \AAdag W^{\dag}U)\delta U U^{\dag} W \AAdag \Big\}.
\end{align}
On one of the critical submanifolds $C_{\{\nu_{i}\}}$, the Hessian is given by 
\begin{align}
	\Hess_{J_{P},U}(\delta U) & =  \Tr( \AAdag W^{\dag}U)\delta U U^{\dag}W \AAdag \nonumber\\
	&\qquad + \Tr( \AAdag W^{\dag}U)W \AAdag U^{\dag}\delta U\nonumber\\
	& \qquad\mbox{} - 2\Tr( \AAdag W^{\dag}\delta U)W \AAdag .
\end{align}
Writing a critical $U\in C_{\{\nu_{i}\}}$ as $U = \phase WZ$ and letting $Y = U^{\dag}\delta U$, the Hessian eigenvalue problem $\Hess_{J_{P},U}(\delta U ) = \lambda \delta U$ can be written as an eigenvector problem on $\uH$ as
\begin{equation}
	\Tr( \AAdag Z)\big[YZ \AAdag + Z \AAdag Y\big] - 2\Tr( \AAdag ZY)Z \AAdag = \lambda Y.
	\label{eqn:JPskewHermHessian}	
\end{equation}
Observe that $\Tr( \AAdag ZY) = 0$ for any $Y$ in $\mathrm{u}(\ES_{0})$, $\mathrm{su}(\ES_{i}^{\pm})$ for $i=1,\dots, \kappa$, $\mathrm{u}(\ES_{i})/\big(\mathrm{u}(\ES_{i}^{-})\oplus\mathrm{u}(\ES_{i}^{+})\big)$ for $i=1,\dots, \kappa$, or $\mathrm{U}(\ES_{i}^{s_{i}}\oplus \ES_{j}^{s_{j}})/\big[\mathrm{U}(\ES_{i}^{s_{i}})\oplus\mathrm{U}(\ES_{j}^{s_{j}})\big]$ for $0\leq i< j$ and $s_{i},s_{j}\in\{\pm\}$.  So for $Y$ in any of these 
subspaces of $\uH$, the eigenvalue problem \eqref{eqn:JPskewHermHessian} becomes
\begin{equation}
	\Tr( \AAdag Z)\big[YZ \AAdag + Z \AAdag Y\big] = \lambda Y,
\end{equation}
which means, as in Section \ref{sec:JFHessian}, that each element of these subspaces is an eigenvector as in Table \ref{tab:JPHessSubspaces}.
\begin{table}
	\centering
	\caption{Some eigenvalues and eigenspaces of the operator $\mathcal{L}_{U^{\dag}}\circ\Hess_{J_{P},U}\circ\mathcal{L}_{U}\in\mathcal{B}\big(\uH\big)$ described in \eqref{eqn:JPskewHermHessian}, where $\mathcal{L}_{U}$ denotes left multiplication by $U$.}
	\setlength{\extrarowheight}{3pt}
	\begin{tabular}{l|l}
	Subspace of $\mathrm{u}(\mathcal{H})$ & Eigenvalue $\lambda$\\
	\hline
	$\mathrm{u}(\ES_{0})$ & 0\\
	$\mathrm{su}(\ES_{i}^{\pm})$ & $\pm 2\omega_{i}^{2}\Tr( \AAdag Z)$\\
	$\mathrm{u}(\ES_{i})/\big[\mathrm{u}(\ES_{i}^{-})\oplus\mathrm{u}(\ES_{i}^{+})\big]$ & 0\\
	$\mathrm{u}(\ES_{i}^{s_{i}}\oplus \ES_{j}^{s_{j}})/\big[\mathrm{u}(\ES_{i}^{s_{i}})\oplus \mathrm{u}(\ES_{j}^{s_{j}})\big]$ & $(s_{i}\omega_{i}^{2} + s_{j}\omega_{j}^{2})\Tr( \AAdag Z)$
	\end{tabular}
	\setlength{\extrarowheight}{0pt}
	\label{tab:JPHessSubspaces}
\end{table}
The only subspace of $\uH$ not covered by these cases is the subspace $S\subset\uH$ spanned by elements $Y\in \uH$ of the form $Y = \bigoplus_{i=1}^{\kappa}\big(\alpha_{i}^{-}\identity_{\ES_{i}^{-}} \oplus \alpha_{i}^{+}\identity_{\ES_{i}^{+}}\big)$ for imaginary numbers $\{\alpha_{i}^{\pm}\}$.  For such a $Y$, \eqref{eqn:JPskewHermHessian} is block diagonal with $\ES_{i}^{\pm}$ diagonal block 
\begin{equation}
	\pm\omega_{i}^{2}\big(2\Tr( \AAdag Z)\alpha_{i}^{\pm} - 2\Tr( \AAdag ZY)\big)\identity_{\ES_{i}^{\pm}} = \lambda \alpha_{i}^{\pm}\identity_{\ES_{i}^{\pm}}.
\end{equation}
Solving for $\alpha_{i}^{\pm}$, we find that, for $\lambda\neq\pm\omega_{i}^{2}\Tr( \AAdag Z)$ for all $\ES_{i}^{\pm}$ such that $\dim(\ES_{i}^{\pm})>0$, 
\begin{equation}
	\alpha_{i}^{\pm} = \frac{2\omega_{i}^{2}\Tr( \AAdag ZY)}{2\omega_{i}^{2}\Tr( \AAdag Z) \mp\lambda}.
	\label{eqn:JPremEigenvectorSoln}
\end{equation}
Then for $Y = \bigoplus_{i=1}^{\kappa}\big(\alpha_{i}^{-}\identity_{\ES_{i}^{-}} \oplus \alpha_{i}^{+}\identity_{\ES_{i}^{+}}\big)\in S$, it follows that 
\begin{subequations}
\begin{align}
	\Tr&( \AAdag ZY) = \sum_{i=1}^{\kappa}\omega_{i}^{2}(-\nu_{i}\alpha_{i}^{-} + (n_{i}-\nu_{i})\alpha_{i}^{+})\\
	& = \Tr( \AAdag ZY)\sum_{i=1}^{\kappa}2\omega_{i}^{4}\Bigg(-\frac{\nu_{i}}{2\omega_{i}^{2}\Tr( \AAdag Z) + \lambda}\nonumber\\
	& \qquad \qquad \qquad \qquad \qquad  + \frac{n_{i}-\nu_{i}}{2\omega_{i}^{2}\Tr( \AAdag Z) - \lambda}\Bigg)
\end{align}
\end{subequations}
which implies that either $\Tr( \AAdag ZY) = 0$ or $f(\lambda) = 1$ where
\begin{equation}
	f(\lambda) := \sum_{i=1}^{\kappa}\left(-\frac{2\omega_{i}^{4}\nu_{i}}{2\omega_{i}^{2}\Tr( \AAdag Z) + \lambda} + \frac{2\omega_{i}^{4}(n_{i}-\nu_{i})}{2\omega_{i}^{2}\Tr( \AAdag Z) - \lambda}\right).
	\label{eqn:flambda}
\end{equation}

\begin{figure}
	\centering
	\includegraphics[scale=0.9,clip = true]{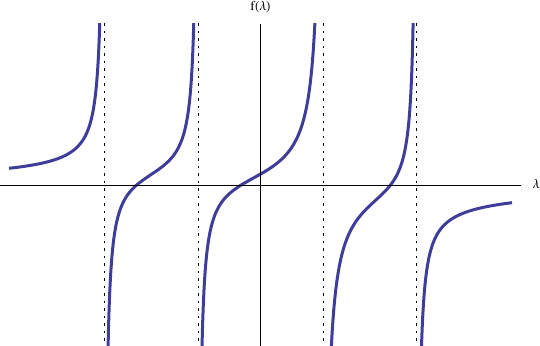}
	\caption{A depiction of the general properties of $f(\lambda)$ defined in Eq.\eqref{eqn:flambda}, namely that $f(\lambda)\to 0$ as $\lambda\to\pm\infty$, $f(0) = 1$, $f$ exhibits  simple poles at some subset of $\{\pm 2\omega_{i}^{2}\Tr(\AAdag Z)\}$, and that $f$ is stricly increasing away from these poles.}
	\label{fig:flambda}
\end{figure}

It may be observed that $f(\lambda)\to 0$ as $\lambda\to\pm\infty$, that $f$ is an increasing function away from its poles, and that $f$ has a simple pole at $-2\omega_{i}^{2}\Tr( \AAdag Z)$ for any $i=1,\dots,\kappa$ for which $\nu_{i}>0$, and a simple pole at $2\omega_{i}^{2}\Tr( \AAdag Z)$ for any $i$ for which $\nu_{i}<n_{i}$ (see Fig. \ref{fig:flambda}).  The number of distinct poles is then equal to the dimension of the subspace $S\in\uH$ under consideration; furthermore, if $\nu_{i}=n_{i}$ for all $i$, then $\Tr( \AAdag Z)<0$ which was disallowed by convention, so $f$ must have at least one positive pole.  It is then clear that $f(\lambda) = 1$ has $\dim(S)$ distinct solutions: one less than the smallest pole, and one between each pair of adjacent poles.  Moreover, it may be seen that $\lambda = 0$ is one of these solutions, corresponding to the eigenvector $Y^{(0)} = i\bigoplus_{i=1}^{\kappa}\identity_{\ES_{i}}$ of \eqref{eqn:JPskewHermHessian}.  The number of  solutions $f(\lambda) = 1$ with $\lambda < 0$ is then equal to the number $\nu_{+}$ of $i\in\{1,\dots,\kappa\}$ for which $\nu_{i}>0$, and the number of solutions $f(\lambda) = 1$ with $\lambda > 0$ is one less than the number $\nu_{-}$ of $i\in\{1,\dots,\kappa\}$ for which $\nu_{i}<n_{i}$.  To each of these solutions $\lambda$, we may associate the eigenvector 
\begin{equation}
	Y^{(\lambda)} = \bigoplus_{i=1}^{\kappa}\big(\alpha_{i}^{-}\identity_{\ES_{i}^{-}} \oplus \alpha_{i}^{+}\identity_{\ES_{i}^{+}}\big)\in S
\end{equation}
with 
\begin{equation}
	\alpha_{j}^{\pm} = \frac{2i \omega_{j}^{2}\Tr( \AAdag Z)}{2\omega_{j}^{2}\Tr( \AAdag Z) \mp \lambda}.
\end{equation}

So, to summarize, at a critical point $U = \phase WZ\in C_{\{\nu_{i}\}}$, the Hessian null space is 
\begin{equation}
	E_{0} = U\left(\mathrm{u}(\ES_{0})\oplus \left[\mathrm{u}(1)\identity_{\mathcal{H}/\ES_{0}} \times\bigoplus_{i=1}^{\kappa}\frac{\mathrm{u}(\ES_{i})}{\mathrm{u}(\ES_{i}^{-})\oplus\mathrm{u}(\ES_{i}^{+})}\right]\right)
\end{equation}
which is readily seen to be identical to the tangent space of $C_{\{\nu_{i}\}}$ and is of dimension
\begin{equation}
	\mathcal{N}_{0} = 1 + n_{0}^{2} + 2\sum_{i=1}^{\kappa}\nu_{i}(n_{i}-\nu_{i}).
\end{equation}
The negative Hessian eigenspace is 
\begin{equation}
	E_{-} \!= U\!\left(\!\bigoplus_{\jmdstack{\lambda<0}{f(\lambda)=1}}\!\!\!\mathbb{R}Y^{(\lambda)}\!\oplus\bigoplus_{i=1}^{\kappa}\mathrm{su}(\ES_{i}^{-})\!\oplus\!\!\!\bigoplus_{0\leq i < j}^{\kappa}\!\frac{\mathrm{u}(\ES_{i}\oplus \ES_{j}^{-})}{\mathrm{u}(\ES_{i})\!\oplus\!\mathrm{u}(\ES_{j}^{-})}\!\!\right)\!,
\end{equation}
having dimension
\begin{equation}
	\mathcal{N}_{-} = \sum_{i=1}^{\kappa}\nu_{i}^{2} +2\sum_{0\leq i < j}^{\kappa}n_{i}\nu_{j},
\end{equation}
and the positive Hessian eigenspace is 
\begin{equation}
	E_{+} \!= U\!\left(\!\bigoplus_{\jmdstack{\lambda>0}{f(\lambda)=1}}\!\!\!\mathbb{R}Y^{(\lambda)}\!\oplus\bigoplus_{i=1}^{\kappa}\mathrm{su}(\ES_{i}^{+})\!\oplus\!\!\!\bigoplus_{0\leq i < j}^{\kappa}\!\frac{\mathrm{u}(\ES_{i}\oplus \ES_{j}^{+})}{\mathrm{u}(\ES_{i})\!\oplus\!\mathrm{u}(\ES_{j}^{+})}\!\!\right)\!,
\end{equation}
having dimension
\begin{equation}
	\mathcal{N}_{+} = -1 + \sum_{i=1}^{\kappa}(n_{i}-\nu_{i})^{2} +2\sum_{0\leq i < j}^{\kappa}n_{i}(n_{j}-\nu_{j}).
\end{equation}
As a result, $\mathcal{N}_{-} = 0$ if and only if $\nu_{i} = 0$ for all $i = 1,\dots,\kappa$, i.e. only at the global minimum
\begin{equation}
	\{U=\phase W(Z_{0}\oplus\identity_{\mathcal{H}/\ES_{0}})\;:\; \phase \in\mathrm{U}(1), Z_{0}\in\mathrm{U}(\ES_{0})\}.
\end{equation}
Furthermore, $\mathcal{N}_{+}$ can be zero for such a critical point only if $\sum_{i=1}^{\kappa}(n_{i}-\nu_{i})^{2} = 1$ and $\sum_{0\leq i < j}^{\kappa}n_{i}(n_{j}-\nu_{j}) = 0$, i.e. only if $n_{0} = 0$, $\nu_{1} = n_{1}-1$, and $\nu_{i} = n_{i}$ for $i=2,\dots,\kappa$.  However, in this case, the constraint $\Tr( \AAdag Z)>0$ implies that $2\omega_{1}^{2}>\sum_{i=1}^{\kappa}n_{i}\omega_{i}^{2}$, which can only happen in the trivial case $\kappa = N = 1$ where global phase rotations are the only dynamics and every point is critical with respect to $J_{P}$.  Therefore, there are no maxima among the $C_{\{\nu_{i}\}}$.  The only maximal points of $J_{P}$ belong to the global maximum set considered presently.

\subsection{Global Maximum Set}
Finally, consider the global maximum set $\{U\;:\; J_{P}(U) = \|A\|^{4}\} = \{U\;:\;\Tr( \AAdag W^{\dag}U) = 0\}$, which is the intersection of $\UH$ with the (complex) hyperplane in $\BH$ orthogonal to $W \AAdag$.  This set does not admit analysis by the methods used thus far, so a different approach is required.  Let $F:\UH\rightarrow \mathbb{R}^{2}$ be given by $F_{1}:= \Re\Tr( \AAdag W^{\dag}U)$ and $F_{2} := \Im\Tr( \AAdag W^{\dag}U)$.  Then,
\begin{subequations}
\begin{align}
	\rmd_{U}F_{1}(\delta U) & = \Re\Tr( \AAdag W^{\dag}\delta U) \nonumber\\
	& = \frac{1}{2}\langle W \AAdag - U \AAdag W^{\dag}U, \delta U\rangle\\
	\rmd_{U}F_{2}(\delta U) & = \Im\Tr( \AAdag W^{\dag}\delta U) \nonumber\\
	&  = \frac{1}{2}\langle iW \AAdag + iU \AAdag W^{\dag}U,\delta U\rangle
\end{align}
\end{subequations}
so that the gradients are given by
\begin{subequations}
\begin{align}
	\grad F_{1}(U) & = \frac{1}{2}\big(W \AAdag - U \AAdag W^{\dag}U\big)\\
	\grad F_{2}(U) & = \frac{i}{2}\big(W \AAdag + U \AAdag W^{\dag}U\big).
\end{align}
\end{subequations}
Thus, $\rmd F$ is surjective except where there exists $(\alpha, \beta)\neq 0\in\mathbb{R}^{2}$ such that $\alpha\grad F_{1}(U) = \beta\grad F_{2}(U)$, i.e. where
\begin{equation}\phase U^{\dag}W \AAdag = \bar{\phase} \AAdag W^{\dag}U,\end{equation}
where $\phase = (\alpha-i\beta)/|\alpha + i\beta|$.  As we have already seen, this equation implies that $U = \phase W\bigoplus_{i=0}^{\kappa}Z_{i}$ with $Z_{i}^{2}=\identity_{\ES_{i}}$ for $i\geq 1$. 

For such a $U$, 
\begin{equation}
	\Tr( \AAdag W^{\dag}U) = \phase\Tr( \AAdag Z) = \phase\sum_{i=1}^{\kappa}\omega_{i}^{2}(n_{i}-2\nu_{i}),
\end{equation}
so the only way that $\Tr( \AAdag W^{\dag}U)$ can be zero is if the vector $(\omega_{1}^{2},\dots,\omega_{\kappa}^{2})$ is orthogonal to one of the possible vectors $(n_{1}-2\nu_{1},\dots,n_{\kappa}-2\nu_{\kappa})$, i.e. $(\omega_{1}^{2},\dots,\omega_{\kappa}^{2})$ must lie in the union of the $2^{N-n_{0}-1}$ hyperplanes which are the orthogonal spaces of the vectors $(n_{1}-2\nu_{1},\dots,n_{\kappa}-2\nu_{\kappa})$.  Consequently, for a given orthogonal decomposition $\mathcal{H}=\bigoplus_{i=0}^{\kappa}\ES_{i}$,  the collection of all $A$'s for which $\mathcal{A}^{2} = AA^{\dagger} = \bigoplus_{i=1}^{\kappa}\omega_{i}^{2}\identity_{\ES_{i}}$ and $\rmd F$ is surjective at $F(U) = 0$ [hence $\{U\;:\;F(U) = 0\}$ is a codimension 2 submanifold of $\UH$] comprises an open dense subset of the $A$'s in $\BH$ for which $AA^{\dagger} = \bigoplus_{i=1}^{\kappa}\omega_{i}^{2}\identity_{\ES_{i}}$.  It follows that the set of all $A$ operators for which $\rmd F$ is surjective at $F(U) = 0$ is open and dense in $\BH$.

Now, at a point $U$ such that $\Tr( \AAdag W^{\dag}U) = 0$,
\begin{subequations}
\begin{align}
	& \Hess_{J_{P},U}(\delta U)\nonumber\\
	 & = -\Tr( \AAdag W^{\dag}\delta U)W \AAdag + \Tr(\delta U^{\dag}W \AAdag )U \AAdag W^{\dag}U\\
	 & = -\frac{1}{2}\Big\{\langle W \AAdag - U \AAdag W^{\dag}U, \delta U\rangle(W \AAdag - U \AAdag W^{\dag}U)\nonumber\\
	 & \quad \quad+ \langle iW \AAdag + iU \AAdag W^{\dag}U, \delta U\rangle(iW \AAdag + iU \AAdag W^{\dag}U)\Big\}
\end{align}	
\end{subequations}
so that the Hessian is rank 2 except where there exists $(\alpha, \beta)\neq 0\in\mathbb{R}^{2}$ such that $\alpha\big(W \AAdag - U \AAdag W^{\dag}U\big) = \beta\big(iW \AAdag + iU \AAdag W^{\dag}U\big)$.     This is exactly the condition just considered for the surjectivity of $\rmd F$, so that the Hessian is rank 2 if and only if $\rmd F$ is surjective.  So, on the open dense set of $A$'s for which this happens, the maximum set is a nondegenerate (in the Morse-Bott sense), codimension 2 submanifold of $\UH$ (this is similar to a classical result on such distance functions \cite[Thm. 6.6]{Milnor1973}).  Since the other critical points also comprise nondegenerate submanifolds, we conclude that for these $A$'s, $J_{P}$ is a Morse-Bott function.   

Let $P := U^{\dag}W \AAdag - \AAdag W^{\dag}U\in\uH$ and $Q:=iU^{\dag}W \AAdag + i \AAdag W^{\dag}U\in\uH$.   Then letting $Y = U^{\dag}\delta U$, the Hessian eigenvalue equation at the global maximum may be written
\begin{equation}
	\lambda Y = -\frac{1}{2}\big(\langle P,Y\rangle P + \langle Q,Y \rangle Q\big).
	\label{eqn:JPmaxSkewHermHessian}
\end{equation}
Let $\gamma_{\pm}$ and $v^{\pm}$ be the eigenvalues and eigenvectors of the real, symmetric, positive semidefinite Gram matrix
\begin{equation}
	G = \begin{bmatrix}\langle P,P\rangle & \langle P,Q\rangle\\\langle Q,P\rangle & \langle Q,Q\rangle\end{bmatrix}.
\end{equation}
Then the eigenvalues of \eqref{eqn:JPmaxSkewHermHessian} are $\lambda_{\pm} = -\gamma_{\pm}/2$ and the eigenvectors are $Y^{\pm} = v_{1}^{\pm}P + v_{2}^{\pm}Q$.

\subsection{Examples}
\begin{example}[$ \AAdag $ is a projection]
Suppose that $ \AAdag $ is a projection, or equivalently, $A$ is a partial isometry.  Then $\mathcal{H} = \ES_{0}\oplus \ES_{1}$, $\omega_{0}^{2} = 0$, and $\omega_{1}^{2} = 1$.  By Theorem \ref{thm:phaseInvariantSummary}, aside from the global maximum set, $J_{P}$ admits exactly $\lceil n_{1}/2\rceil$ critical submanifolds $C_{\nu_{1}}$ for $\nu_{1}=0,\dots,\lceil n_{1}/2\rceil-1$, each isomorphic to $\mathrm{U}(\ES_{0})\oplus \big[\mathrm{U}(1)\times\GrES{\nu_{1}}{1}\big]$ and having dimension $\mathcal{N}_{0} = 1 + n_{0}^{2} + 2\nu_{1}(n_{1}-\nu_{1})$.  The critical submanifold $C_{\nu_{1}}$ has critical value $J_{P} = 4\nu_{1}(n_{1}-\nu_{1})$, and the Hessian operator at a critical point $U\in C_{\nu_{1}}$ can have (depending on $\nu_{1}$ and $n_{0}$) eigenvalues $-2n_{1}$, $-2n_{1}+4\nu_{1}$, $-n_{1}+2\nu_{1}$, $0$, $n_{1}-2\nu_{1}$, and $2n_{1}-4\nu_{1}$.  The negative Hessian eigenspace at $U$ is
\begin{align}
	E_{-} & = E_{-2n_{1}}\oplus E_{-2n_{1}+4\nu_{1}}\oplus E_{-n_{1}+2\nu_{1}}\nonumber\\
	&  = U\left[\mathbb{R}Y^{(-2n_{1})}\oplus \mathrm{su}(\ES_{1}^{-}) \oplus \frac{\mathrm{u}(\ES_{0}\oplus \ES_{1}^{-})}{\mathrm{u}(\ES_{0})\oplus \mathrm{u}(\ES_{1}^{-})}\right]
\end{align}
 of dimension $\mathcal{N}_{-}=(2n_{0}+\nu_{1})\nu_{1}$, where 
\begin{equation}
	Y^{(-2n_{1})} = \begin{cases}\frac{i}{\nu_{1}}\identity_{\ES_{1}^{-}}\oplus\frac{i}{n_{1}-\nu_{1}}\identity_{\ES_{1}^{+}} & \nu_{1}>0\\0 & \text{otherwise}.\end{cases}
\end{equation}
The positive Hessian eigenspace at $U$ is
\begin{align}
	E_{+} & = E_{2n_{1}-4\nu_{1}}\oplus E_{n_{1}-2\nu_{1}}\nonumber\\
	&  = U\left[\mathrm{su}(\ES_{1}^{+})\oplus\frac{\mathrm{u}(\ES_{0}\oplus \ES_{1}^{+})}{\mathrm{u}(\ES_{0})\oplus\mathrm{u}(\ES_{1}^{+})}\right]
\end{align}
 of dimension $\mathcal{N}_{+} = (2n_{0}+n_{1}-\nu_{1})(n_{1}-\nu_{1}) - 1$.  The full set of critical values for this problem is $\{0,4(n_{1}-1),8(n_{1}-2),\dots, 4\lfloor (n_{1}^{2}-1)/4\rfloor, n_{1}^{2}\}$, where the final value $n_{1}^{2} = \|A\|^{4}$ is the globally maximal value.  In the particular case that $ \AAdag $ is fully degenerate (e.g., $A = \identity$), it is found that the maximum set of $J_{P}$ is a nondegenerate submanifold if and only if $N=\dim(\mathcal{H})$ is odd.  However, when $N$ is even, arbitarily small perturbations of $A$ about $\identity$ are sufficient to obtain a Morse-Bott function.
\end{example}

\begin{example}[$\omega_{i}^{2} = 2^{i}$]
Suppose that $ \AAdag $ is nondegenerate and nonsingular, so that $\mathcal{H} = \bigoplus_{i=1}^{N} \ES_{i}$, $n_{i}=1$ for all $i=1,\dots, N$, and $\omega_{1}^{2} < \dots < \omega_{N}^{2}$.  Suppose further that $\omega_{i}^{2} = 2^{i}$ for $i=1,\dots,N$.  Then $\omega_{N}^{2}>\sum_{i=1}^{N-1}\omega_{i}^{2}$, so $\sum_{i=1}^{N}\omega_{i}^{2}(n_{i}-2\nu_{i}) > 0$ if and only if $\nu_{N} = 0$.  Therefore, by Theorem \ref{thm:phaseInvariantSummary}, the critical set of $J_{P}$ comprises 1 global maximal set and exactly $2^{N-1}$ critical submanifolds $C_{\{\nu_{i}\}}$, each isomorphic to $U(1)$ and equal to a global phase orbit $\{\phase  U\;:\; \phase \in\mathrm{U}(1)\}$ of a single unitary operator.  For the critical submanifold identified by indices $(\nu_{1},\dots,\nu_{N})$, the critical value is 
\begin{align}
	J_{P} & = 4\left(\sum_{i=1}^{N-1}2^{i}\nu_{i}\right)\left(\sum_{i=1}^{N}2^{i}(n_{i}-\nu_{i})\right)\nonumber\\
	& = 4\left(\sum_{i=1}^{N-1}2^{i}\nu_{i}\right)\left[2^{N+1}-2 - \sum_{i=1}^{N-1}2^{i}\nu_{i}\right],
\end{align}
the negative Hessian eigenbundle has dimension $\mathcal{N}_{-} = \sum_{i=1}^{N-1}(2i-1)\nu_{i}$, and the positive eigenbundle has dimension $\mathcal{N}_{+} = N^{2} - 1 - \sum_{i=1}^{N-1}(2i-1)\nu_{i}$.  The full set of critical values for this problem is $\{16(2^{N}-2),32(2^{N}-3),48(2^{N}-4),\dots, 2^{N+1}(2^{N+1}-4), (2^{N+1}-2)^{2}\}$.  The last of these critical values corresponds to the global maximum.
\end{example}

\bigskip

\section{Landscapes Based on Intrinsic Distance}\label{sec:intrinsicDist}
The kinematic landscapes $J_{F}$ and $J_{P}$ considered above are based on the Euclidean (or norm) distance on $\UH$ and $\PUH$, respectively.  We now describe two additional distance measures based on the intrinsic distance between operators in $\UH$ and $\PUH$ under the Riemannian metric induced by the real Hilbert-Schmidt inner product on $\BH$.  

The first of these distance measures is quite simple to define.  Since the chosen Riemannian metric is bi-invariant on $\UH$, any geodesic starting at $U\in\UH$ is of the form $\gamma(s) = Ue^{As}$ for some $A\in\uH$.  To find a geodesic joining $U$ to some target $W\in\UH$, let $Ue^{A} = \gamma(1) = W$, so that $e^{A} = U^{\dag}W$ and $A = \log(U^{\dag}W)$.  This matrix logarithm is not uniquely defined, but the length of the geodesic $\gamma$ defined on the interval $[0,1]$ is given by $L[\gamma] = \int_{0}^{1}\|\gamma'(s)\|\,ds = \|A\|$.  The minimum such length is obtained by taking $A=\log(U^{\dag}W)$ from the principal branch of the logarithm so that all eigenvalues lie in $(-i\pi, i\pi]$.  We then define the landscape as 
\begin{equation}
	J_{G}(U) := \frac{1}{2}\|\log(U^{\dag}W)\|^{2}. 
\end{equation} Then the gradient of $J_{G}$ is given by (see Appendix \ref{sec:gradJG})
\begin{equation}
	\grad J_{G}(U) = -U\log(U^{\dag}W).
\end{equation}
As most numerical matrix logarithm routines (e.g., the {\tt {logm}} function in MATLAB) compute the principal branch, they provide a ready means to obtain both the landscape value and the gradient.  Since the norm of $\grad J_{G}$ is the distance to the target, this vector field is only zero at the target, i.e. the global minimum of the landscape.  Hence, there are no traps or saddles.  The gradient field has the property that it is discontinuous and multiply defined at the cut loci of $\UH$ (where the spectrum of $U^{\dag}W$ contains $-1$), but this is not a problem for an optimal control algorithm since the matrix logarithm routine will have to choose one from among the possible solutions, all of which describe minimal geodesics to the target that are equally satisfactory.

A phase-invariant version of $J_{G}$ may be constructed analogously by considering minimal geodesics on the projective unitary group $\PUH\simeq \UH/\mathrm{U}(1)$, or equivalently by defining $J_{GP}(U):= \min_{\phase\in\mathrm{U}(1)}\frac{1}{2}\|\log(\phase U^{\dag}W)\|^{2}$ on $\UH$.  It may be shown that 
\begin{subequations}
\begin{align}
	J_{GP} & = \min_{k\in\mathbb{Z}_{N}}\frac{1}{2}\left\|\log\left(e^{\frac{2\pi i k}{N}}\det(U^{\dag}W)^{-\frac{1}{N}}U^{\dag}W\right)\right\|^{2}\label{eqn:phaseInvGeod}\end{align}  
\begin{align}
	&\grad J_{GP}(U)\nonumber\\
	& = -U\bigg\{\log\left(e^{\frac{2\pi i k}{N}}\det(U^{\dag}W)^{-\frac{1}{N}}U^{\dag}W\right)\nonumber\\
	& \qquad  - \Tr\left[\log\left(e^{\frac{2\pi i k}{N}}\det(U^{\dag}W)^{-\frac{1}{N}}U^{\dag}W\right)\right]\frac{\identity}{N}\bigg\},\label{eqn:phaseInvGeodGrad}
\end{align}  
\end{subequations}
where $k$ in \eqref{eqn:phaseInvGeodGrad} is the minimizer from \eqref{eqn:phaseInvGeod}.  With this minimizing $k$, the trace in \eqref{eqn:phaseInvGeodGrad} will be zero, so that 
\begin{equation}
	\grad J_{GP}(U) = -U\log\left(e^{\frac{2\pi i k}{N}}\det(U^{\dag}W)^{-\frac{1}{N}}U^{\dag}W\right).
\end{equation}
As with $J_{G}$, the norm of $\grad J_{GP}$ is the distance to the target, and this vector field is only zero at the target, i.e. the global minimum of the landscape.  Hence, there are no traps or saddles.  One downside to this landscape is that it appears that all $N$ possible values of $k$ must be tried in order to find the minimizer of \eqref{eqn:phaseInvGeod}.  This behavior has a topological interpretation on $\PUH$.  Since the fundamental group of $\PUH$ is $\pi_{1}(\PUH) \cong \mathbb{Z}_{N} = \mathbb{Z}/N\mathbb{Z}$, there are exactly $N$ homotopy classes of paths connecting $\pi(U)$ to the target $\pi(W)$.  Within each of these classes is a unique minimal geodesic, and these $N$ minimal geodesics are identified by the vectors 
\begin{align}
	U\bigg\{\log&\left(e^{\frac{2\pi i k}{N}}\det(U^{\dag}W)^{-\frac{1}{N}}U^{\dag}W\right) \nonumber\\
	& \mbox{} - \frac{1}{N}\Tr\left[\log\left(e^{\frac{2\pi i k}{N}}\det(U^{\dag}W)^{-\frac{1}{N}}U^{\dag}W\right)\right]\identity\bigg\}
\end{align}
indexed by $k$.

A distance metric based on intrinsic distance could in principle be applied to the case where only some of the states are important, analogous to $J_{F}$ and $J_{P}$ where $A$ is rank deficient (e.g. where A is a projector).  This is equivalent to computing the geodesic distance between points on the Stiefel manifold $V_{\mathcal{H}/\ES_{0}}(\mathcal{H})\simeq \UH/\big(\identity_{\ES_{0}}\oplus\mathrm{U}(\mathcal{H}/\ES_{0})\big)$ or on its projective cousin $V_{\mathcal{H}/\ES_{0}}(\mathcal{H})/\mathrm{U}(1)$.  However, the two-point geodesics on these spaces are non-trivial to compute.  The calculation requires solution of a boundary value problem or an optimization problem to find each minimal geodesic.  For that reason, these intrinsic distance metrics may not be practical for this scenario.

\bigskip

\section{Dynamical Critical Point Analysis}\label{sec:dynamicalAnalysis}
Now that we have elucidated the structure of the critical sets of the kinematic landscapes $J_{F}$ and $J_{P}$, we return to the problem of characterizing the critical set of the dynamical landscapes $\tilde{J} = J\circ U_{T}$.  Let $\mathcal{M}\subset \UH$ be one of the critical submanifolds identified in the previous sections.  It can be proved (see Appendix \ref{app:differentiability}) that $U_{T}:\mathbb{K}\rightarrow \UH$ is $C^{\infty}$ (i.e., infinitely Fr\'echet differentiable).  In addition, since $\UH$ is finite-dimensional, if $U_{T}(\mathcal{E})\in \mathcal{M}$ then $(\rmd_{\mathcal{E}}U_{T})^{-1}\big(\rmT_{U_{T}(\mathcal{E})}\mathcal{M}\big)$ has finite codimension, so is closed and has a closed complement (i.e., it ``splits'').  Therefore, away from singular points of $U_{T}$ (i.e., those $\mathcal{E}\in\mathbb{K}$ such that $\rmd_{\mathcal{E}}U_{T}$ is rank-deficient), $U_{T}$ is \emph{transversal} to $\mathcal{M}$ and by the transversal mapping theorem \cite{Abraham1988}, $U_{T}^{-1}(\mathcal{M})$ is a Hilbert submanifold of $\mathbb{K}$, $\rmT_{\mathcal{E}}\big(U_{T}^{-1}(\mathcal{M})\big) = (\rmd_{\mathcal{E}}U_{T})^{-1}\big(\rmT_{U_{T}(\mathcal{E})}\mathcal{M}\big)$, and $\codim\big(U_{T}^{-1}(\mathcal{M})\big) = \codim(\mathcal{M})$.

Let $\mathcal{E}\in\mathbb{K}$ be a regular critical point of $\tilde{J}$, i.e. such that $\grad\tilde{J}(\mathcal{E}) = 0$ and $\rmd_{\mathcal{E}}U_{T}$ is full rank.  It may be seen that at such a point, the Hessian of $\tilde{J}$ is given by $\Hess_{\tilde{J},\mathcal{E}} = (\rmd_{\mathcal{E}} U_{T})^{*}\circ\Hess_{J,U_{T}(\mathcal{E})}\circ (\rmd_{\mathcal{E}} U_{T})$.  Let $\mathcal{A}_{\mathcal{E}}$ be the linear operator on $\rmT_{U_{T}(\mathcal{E})}\UH$ given by 
\begin{equation}
	\mathcal{A}_{\mathcal{E}} = \big(\rmd_{\mathcal{E}}U_{T}\circ (\rmd_{\mathcal{E}}U_{T})^{*}\big)^{\frac{1}{2}}\circ\Hess_{J,U_{T}(\mathcal{E})}\circ\big(\rmd_{\mathcal{E}}U_{T}\circ (\rmd_{\mathcal{E}}U_{T})^{*}\big)^{\frac{1}{2}}.
\end{equation}
Since $\rmd_{\mathcal{E}} U_{T}$ is assumed to have full rank, we may invoke Sylvester's law of inertia \cite{Horn1985} to conclude that $\mathcal{A}_{\mathcal{E}}$ and $\Hess_{J,U_{T}(\mathcal{E})}$ have the same numbers of positive, negative, and zero eigenvalues.  Let $\{(\eta_{j}, Q_{j})\}$ for $j=1,\dots,N^{2}$ be the eigenvalues and eigenvectors of $\mathcal{A}_{\mathcal{E}}$, and let $Z_{j} = (\rmd_{\mathcal{E}}U_{T})^{*}\circ\big(\rmd_{\mathcal{E}}U_{T}\circ (\rmd_{\mathcal{E}}U_{T})^{*}\big)^{-\frac{1}{2}}Q_{j}$.  Then 
\begin{align}
	\Hess_{\tilde{J},\mathcal{E}}Z_{j} & = (\rmd_{\mathcal{E}} U_{T})^{*}\!\circ\!\Hess_{J,U_{T}(\mathcal{E})}\circ \big(\rmd_{\mathcal{E}}U_{T}\!\circ\! (\rmd_{\mathcal{E}}U_{T})^{*}\big)^{\frac{1}{2}}Q_{j} \nonumber\\
	& = \eta_{j}(\rmd_{\mathcal{E}}U_{T})^{*}\circ\big(\rmd_{\mathcal{E}}U_{T}\circ (\rmd_{\mathcal{E}}U_{T})^{*}\big)^{-\frac{1}{2}}Q_{j}\nonumber\\
	& = \eta_{j}Z_{j}
\end{align}
so that $\{(\eta_{j},Z_{j})\}$ for $j=1,\dots,N^{2}$ are eigenvalues and eigenvectors of $\Hess_{\tilde{J},\mathcal{E}}$.  Because $\Hess_{\tilde{J},\mathcal{E}}$ is self-adjoint, any other eigenvector $Z$ must be orthogonal to the $\{Z_{j}\}$.  Also, note that since the $\{Q_{j}\}$ span $\rmT_{U_{T}(\mathcal{E})}\UH$, the $\{Z_{j}\}$ span $\Range \big((\rmd_{\mathcal{E}}U_{T})^{*}\big)$.  Then, for any $X\in\rmT_{U_{T}(\mathcal{E})}\UH$, $ 0 = \langle Z, (\rmd_{\mathcal{E}}U_{T})^{*}(X)\rangle = \langle \rmd_{\mathcal{E}}U_{T}(Z), X\rangle$, so that $\rmd_{\mathcal{E}}U_{T}(Z) = 0$ and therefore $\Hess_{\tilde{J},\mathcal{E}}Z = 0$.  Thus, $\Hess_{\tilde{J},\mathcal{E}}$ has infinitely many eigenvalues; $N^{2}$ of them are identical to the eigenvalues of $\mathcal{A}_{\mathcal{E}}$, and the remaining infinite number of eigenvalues are all zero.  Since $J$ has no local traps, we can conclude that $\tilde{J}$ has no local traps among the regular critical points.  From the transversal mapping theorem we find that $\rmT_{\mathcal{E}}\big(U_{T}^{-1}(\mathcal{M})\big) = (\rmd_{\mathcal{E}}U_{T})^{-1}\big(\rmT_{U_{T}(\mathcal{E})}\mathcal{M}\big)$, implying that for any $f\in\mathbb{K}$, we have 
\begin{subequations}
\begin{align}
	 \Hess_{\tilde{J},\mathcal{E}}(f) &= (\rmd_{\mathcal{E}} U_{T})^{*}\circ\Hess_{J,U_{T}(\mathcal{E})}\circ (\rmd_{\mathcal{E}} U_{T})(f) = 0\nonumber\\
	& \Longleftrightarrow \quad \rmd_{\mathcal{E}} U_{T}(f)\in \ker\Hess_{J,U_{T}(\mathcal{E})}\\
	& \Longleftrightarrow \quad \rmd_{\mathcal{E}} U_{T}(f)\in \rmT_{U_{T}(\mathcal{E})}\mathcal{M}\\
	& \Longleftrightarrow \quad f\in(\rmd_{\mathcal{E}}U_{T})^{-1}\big(\rmT_{U_{T}(\mathcal{E})}\mathcal{M}\big)\\
	& \Longleftrightarrow \quad f\in \rmT_{\mathcal{E}}\big(U_{T}^{-1}(\mathcal{M})\big).
\end{align}
\end{subequations}
Hence, the null space of $\Hess_{\tilde{J},\mathcal{E}}$ is identical to $\rmT_{\mathcal{E}}\big(U_{T}^{-1}(\mathcal{M})\big)$, the tangent space to the critical submanifold.  

In the case where the Hamiltonian takes the dipole form $H(t) = H_{0} - \mathcal{E}(t)\mu$ for any $\mathcal{E}\in L^{2}(\mathbb{R}_{+};\mathbb{R})$, the Fr\'echet derivative of $U_{T}$ is given by \cite{Dominy2008}
\begin{equation}
	\rmd_{\mathcal{E}}U_{T}(\delta \mathcal{E}) = \frac{i}{\hbar}U_{T}(\mathcal{E})\int_{0}^{T}U_{t}^{\dag}(\mathcal{E})\mu U_{t}(\mathcal{E})\delta\mathcal{E}(t)\rmd t.
	\end{equation}
Then the adjoint operator of the derivative is 
\begin{equation}
	\rmd_{\mathcal{E}}U_{T}^{*}(A)(t) = -\Im\Tr\big(A^{\dag}U_{T}(\mathcal{E})U_{t}^{\dag}(\mathcal{E})\mu U_{t}(\mathcal{E})\big)
\end{equation}
 for any $A\in \rmT_{U_{T}(\mathcal{E})}\UH$, and the operator norm of this adjoint is uniformly bounded by $\|\rmd_{\mathcal{E}}U_{T}^{*}\| \leq \sqrt{T}\|\mu\|$.  For any smooth ``kinematic'' function $g:\UH\rightarrow \mathbb{R}$, let $\tilde{g} = g\circ U_{T}$ be the corresponding ``dynamical'' function on $L^{2}(\mathbb{R}_{+};\mathbb{R})$.  Then $\grad\tilde{g}(\mathcal{E}) = \rmd_{\mathcal{E}}U_{T}^{*}(\grad g(U_{T}(\mathcal{E})))$ and 
\begin{subequations}
\begin{align}
 	\|\grad\tilde{g}(\mathcal{E})\| & \leq \|\rmd_{\mathcal{E}}U_{T}^{*}\|\|\grad g(U_{T}(\mathcal{E}))\|\\
	& \leq \sqrt{T}\|\mu\|\|\grad g(U_{T}(\mathcal{E}))\|.
\end{align}
\end{subequations}
Since $g$ is smooth, $\|\grad g\|$ is continuous over $\UH$, so that since $\UH$ is compact, $\|\grad g\|$ is uniformly bounded.  Therefore, $\|\grad \tilde{g}\|$ is uniformly bounded over $L^{2}(\mathbb{R}_{+};\mathbb{R})$.  For any dynamical quantum control landscape constructed in this way, in particular the landscapes considered in the present paper, the slope of the landscape (i.e. the speed of the gradient flow) is uniformly bounded by some constant.  

Taken together, these results show that, even though the control space $\mathbb{K}$ is unbounded and infinite-dimensional and one might naively expect anything to happen, the landscapes under consideration are well-behaved, exhibiting gradient flows which do not get trapped (at least away from singular points) and which do not speed out of control.

\bigskip

\section{Summary}\label{sec:summary}
This work presented an expanded analysis of landscapes $J_{F}$ and $J_{P}$, which are based on the Euclidean distances between unitary operators in $\UH$ and $\PUH$, respectively.  The expansion appears in several ways.  First, additional freedom has been allowed in the landscape functions themselves, by admitting $A$ matrices that are rank-deficient.  Landscapes based on these rank-deficient $A$ matrices measure the distance between unitary operators by their action on a subspace of the full state space.  This can be the desired objective for designing a quantum information processor, for example, where only this subspace of the state space is to be used for the quantum register.  This additional freedom in defining the landscape is consistent with the principal finding of earlier work on landscapes of this form: they have no suboptimal minima (i.e., ``traps'') that could impede a deterministic optimal control algorithm (such as gradient descent) from reaching the global minimum.

In addition to broadening the families of landscapes for consideration, we have provided more detail on the structure of the critical sets and the behavior of the landscape functions at these critical sets.  The critical sets were shown to generally be disjoint unions of critical submanifolds and we have described the structure of these submanifolds, as products of Grassmann manifolds and unitary groups.  Furthermore, we have shown that these critical submanifolds are generally nondegenerate in the Morse-Bott sense, so that the kinematic landscapes are generally Morse-Bott functions.

These results were related back to the corresponding dynamical landscapes through the control-to-propagator map $U_{T}$, implicitly defined by the Schr\"odinger equation, that takes a control function as input and returns the final time unitary evolution operator.  This map was shown to be infinitely Fr\'echet differentiable, leading to the conclusion that, away from the singular points of $U_{T}$, the level sets and critical sets of the dynamical landscapes are $C^{\infty}$ smooth, finite codimension submanfolds of the infinite-dimensional control space $\mathbb{K} = L^{2}(\mathbb{R}_{+};\mathbb{R})$.  Also, the number of positive and negative Hessian eigenvalues (and therefore the characterization as a minimum, maximum, or saddle) was shown to be identical for a kinematic critical point and a regular point of $U_{T}$ that maps to it.  This behavior implies that no traps exist in the dynamical landscape among the set of regular points of $U_{T}$.  Furthermore, Morse-Bott nondegeneracy of the critical set is also preserved away from singular points of $U_{T}$, which can be important for certain numerical landscape exploration methods such as second order D-MORPH \cite{Beltrani2011}. 

Finally, two additional landscapes were introduced that are based on the intrinsic or geodesic distance between operators in $\UH$ and $\PUH$, respectively, rather than Euclidean distance.  These kinematic landscapes have the desirable property of having no critical points except for the global minimum at the target.  These landscapes may allow for more efficient performance of optimal control algorithms over $J_{F}$ and $J_{P}$, since the latter have many saddle points where the gradient is zero.

\bigskip

\appendices

\section{Differentiability of \texorpdfstring{$U(T,0)$}{U(T,0)} With Respect to the Control}
\label{app:differentiability}
Let $\MH\subset \BH$ denote the space of Hermitian operators endowed with the real Hilbert-Schmidt inner product $\langle A,B\rangle_{\mathrm{HS}} = \Re\Tr(A^{\dag}B)$, and let $\mathbb{H}(\mathcal{H}) = L^{2}\big(\mathbb{R}_{+};\MH\big) = \MH\otimes L^{2}(\mathbb{R}_{+}; \mathbb{R})$ denote the space of all square-integrable time-dependent Hamiltonians on $\mathcal{H}$ with inner product 
\begin{equation}
	\langle H_{1},H_{2}\rangle_{L^{2}} = \int_{0}^{\infty}\Re\Tr[H_{1}(t)H_{2}(t)]\,dt.
\end{equation}
Let $Z_{T}:\mathbb{H}(\mathcal{H})\rightarrow \UH$ be the map, defined implicitly through the Schr\"odinger equation, that takes a time-dependent Hamiltonian $H(\cdot)\in\Herm $ and produces the corresponding unitary time-evolution operator at time $T$: $U(T,0)\in\UH$.  This map is well-defined over the entire domain because of the absolute convergence of the Dyson series over $\Herm $:
\begin{align}
	Z_{T}(H) & = \identity +\left(-\frac{i}{\hbar}\right)\int_{0}^{T}\rmd t_{1}\,H(t_{1})\nonumber\\
	& \quad + \left(-\frac{i}{\hbar}\right)^{2}\int_{0}^{T}\rmd t_{1}\,H(t_{1})\int_{0}^{t_{1}}\rmd t_{2}\,H(t_{2}) + \dots.
\end{align}  
In this appendix, we will prove that $Z_{T}$ is infinitely Fr\'echet differentiable over $\Herm $.  A corollary is that the map $U_{T}:\mathbb{K}\to\UH$ defined in the body of the paper is infinitely Fr\'echet differentiable over all of $\mathbb{K} = L^{2}(\mathbb{R}_{+};\mathbb{R})$.

\begin{lemma}
	If $f:[a,b]\rightarrow \mathbb{R}$ is integrable, then
	\begin{align}
		\int_{a}^{b}\rmd t_{1}\,f(t_{1})&\int_{a}^{t_{1}}\rmd t_{2}\,f(t_{2})\cdots\int_{a}^{t_{n-1}}\rmd t_{n}\,f(t_{n})\nonumber\\
		 & = \frac{1}{n!}\left(\int_{a}^{b}f(t)\,\rmd t\right)^{n}.\label{eqn:DysonFactorial}
	\end{align}
	\label{lem:DysonFactorial}
\end{lemma}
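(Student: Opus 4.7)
The plan is to prove the identity by induction on $n$, using the fundamental theorem of calculus in the form appropriate for integrable (i.e., $L^{1}$) functions. Set $F(x) := \int_{a}^{x} f(t)\,\rmd t$. Since $f$ is integrable, $F$ is absolutely continuous on $[a,b]$ with $F'(x) = f(x)$ for almost every $x$, and consequently $F^{n}$ is also absolutely continuous with $(F^{n})'(x) = n F(x)^{n-1}f(x)$ a.e.

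The base case $n=1$ is just the definition of $F(b)$. For the inductive step, applying the hypothesis to the inner nested integral (viewed as the identity for $n-1$ on the interval $[a, t_{1}]$) gives
\begin{equation*}
\int_{a}^{t_{1}}\rmd t_{2}\, f(t_{2})\int_{a}^{t_{2}}\rmd t_{3}\, f(t_{3})\cdots \int_{a}^{t_{n-1}}\rmd t_{n}\, f(t_{n}) = \frac{F(t_{1})^{n-1}}{(n-1)!}.
\end{equation*}
Substituting back into the outer integral and using the chain rule together with the fundamental theorem of calculus for absolutely continuous functions yields
\begin{equation*}
\int_{a}^{b} f(t_{1})\, \frac{F(t_{1})^{n-1}}{(n-1)!}\,\rmd t_{1} = \frac{1}{n!}\int_{a}^{b}(F^{n})'(t_{1})\,\rmd t_{1} = \frac{F(b)^{n} - F(a)^{n}}{n!} = \frac{F(b)^{n}}{n!},
\end{equation*}
which is the claimed identity.

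The only technical point that is not purely formal is justifying that the inductive hypothesis applies on each subinterval $[a, t_{1}]$ (immediate, since the restriction of an integrable function is integrable) and that the chain-rule calculation for $F^{n}$ works even though $f$ need not be continuous. This is the main obstacle in a strictly rigorous treatment: it requires invoking the theorem that a composition of absolutely continuous functions (with the outer being $\mathcal{C}^{1}$) is absolutely continuous, or equivalently invoking integration by parts for absolutely continuous functions. Once that standard fact is cited, the induction goes through cleanly. As a sanity check, I would also mention the alternative symmetrization argument: the nested integral is the integral of $f(t_{1})\cdots f(t_{n})$ over the simplex $\{a\leq t_{n}\leq \cdots \leq t_{1}\leq b\}$, and Tonelli's theorem together with the $S_{n}$-symmetry of the integrand gives $\frac{1}{n!}\int_{[a,b]^{n}} f(t_{1})\cdots f(t_{n})\,\rmd t_{1}\cdots \rmd t_{n} = \frac{1}{n!}\bigl(\int_{a}^{b} f(t)\,\rmd t\bigr)^{n}$, confirming the result by a second independent route.
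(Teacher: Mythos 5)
Your proof is correct and follows essentially the same route as the paper's: induction on $n$, applying the hypothesis to the inner nested integral and then recognizing $f(t_{1})F(t_{1})^{n-1}$ as $\frac{1}{n}(F^{n})'(t_{1})$ to integrate via the fundamental theorem of calculus. The extra care you take with absolute continuity (and the symmetrization cross-check) strengthens the argument for merely integrable $f$ but does not change the underlying approach.
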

\begin{IEEEproof}
	Note first that \eqref{eqn:DysonFactorial} holds trivially for $n=1$.  Suppose that it holds for $n=m$.  Then
	\begin{subequations}
	\begin{align}
		\int_{a}^{b}&\rmd t_{1}\,f(t_{1})\int_{a}^{t_{1}}\rmd t_{2}\,f(t_{2})\cdots\int_{a}^{t_{m}}\rmd t_{m+1}\,f(t_{m+1})\nonumber\\
		& = \frac{1}{m!}\int_{a}^{b}f(t_{1})\left(\int_{a}^{t_{1}}f(t)\,\rmd t\right)^{m}\,\rmd t_{1}\\
		& = \frac{1}{m!}\int_{a}^{b}\frac{\rmd}{\rmd t_{1}}\left[\frac{1}{m+1}\left(\int_{a}^{t_{1}}f(t)\,\rmd t\right)^{m+1}\right]\,\rmd t_{1}\\
		& = \frac{1}{(m+1)!}	\left(\int_{a}^{b}f(t)\,\rmd t\right)^{m+1}
	\end{align}
	\end{subequations}
	and the lemma follows for arbitrary $n\in\mathbb{N}$ by induction.
\end{IEEEproof}

\begin{definition}
For integrable operator-valued functions $A_{i}:[0,T]\rightarrow \BH$ and for integrable real-valued functions $a_{i}:[0,T]\rightarrow \mathbb{R}$, we will use the following short-hand notation for the Dyson-esque terms
\begin{subequations}
\begin{align}
	&\Upsilon_{T}[A_{1},A_{2},\dots, A_{n}]\nonumber\\
	& := \int_{0}^{T}\rmd t_{1}\,A_{1}(t_{1})\int_{0}^{t_{1}}\rmd t_{2}\,A_{2}(t_{2})\cdots\int_{0}^{t_{n-1}}\rmd t_{n}\,A_{n}(t_{n})\\
	&\Upsilon_{T}[a_{1},a_{2},\dots, a_{n}]\nonumber\\
	& := \int_{0}^{T}\rmd t_{1}\,a_{1}(t_{1})\int_{0}^{t_{1}}\rmd t_{2}\,a_{2}(t_{2})\cdots\int_{0}^{t_{n-1}}\rmd t_{n}\,a_{n}(t_{n}).
\end{align}
\end{subequations}
\end{definition}

\begin{lemma}
	If $g_{1}, \dots, g_{m}\in L^{2}(\mathbb{R}_{+};\mathbb{R}_{+})$ are non-negative square-integrable functions on $[0,\infty)$, then 
	\begin{equation}\Upsilon_{T}[g_{1},g_{2},\cdots g_{m}]\leq T^{m/2}\|g_{1}\|_{L^{2}}\cdots\|g_{m}\|_{L^{2}}\end{equation}
	\label{lem:nodupDysonBound}
\end{lemma}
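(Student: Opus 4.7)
The plan is to exploit the non-negativity hypothesis on the $g_{i}$ in order to bound the iterated integral over the time-ordered simplex by the corresponding product integral over the full cube $[0,T]^{m}$, which then separates into one-dimensional integrals each handled by Cauchy--Schwarz.

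Concretely, I would first recognize $\Upsilon_{T}[g_{1},\ldots,g_{m}]$ as the Lebesgue integral of $g_{1}(t_{1})\cdots g_{m}(t_{m})$ over the simplex $\Delta = \{(t_{1},\ldots,t_{m}) \in [0,T]^{m} : T \geq t_{1} \geq t_{2} \geq \cdots \geq t_{m} \geq 0\}$. Since $\Delta \subset [0,T]^{m}$ and each $g_{i}$ is non-negative, enlarging the region of integration yields
\[
\Upsilon_{T}[g_{1},\ldots,g_{m}] \leq \int_{[0,T]^{m}} g_{1}(t_{1})\,g_{2}(t_{2})\cdots g_{m}(t_{m})\,\rmd t_{1}\,\rmd t_{2}\cdots \rmd t_{m}.
\]
By Tonelli's theorem (which applies without integrability hypotheses to non-negative measurable integrands), the right-hand side factors as $\prod_{i=1}^{m} \int_{0}^{T} g_{i}(t)\,\rmd t$.

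Next, each factor is bounded by the Cauchy--Schwarz inequality applied to $g_{i}$ against the constant function $1$ on $[0,T]$:
\[
\int_{0}^{T} g_{i}(t)\,\rmd t \leq \sqrt{T}\,\|g_{i}\|_{L^{2}([0,T])} \leq \sqrt{T}\,\|g_{i}\|_{L^{2}(\mathbb{R}_{+})},
\]
the second step using non-negativity of $g_{i}$ to extend the integration domain while only increasing the norm. Multiplying the $m$ resulting bounds gives $T^{m/2}\prod_{i=1}^{m}\|g_{i}\|_{L^{2}}$, which is precisely the claimed estimate.

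There is no serious obstacle here; the only delicate point is that both the enlargement from $\Delta$ to $[0,T]^{m}$ and the extension of the $L^{2}$ norm from $[0,T]$ to $\mathbb{R}_{+}$ rely crucially on the hypothesis that each $g_{i}$ is non-negative. For real-valued integrands of mixed sign this cube argument would fail, and one would instead induct on $m$: peel off the outermost integral via a single application of Cauchy--Schwarz to obtain a factor $\sqrt{T}\,\|g_{1}\|_{L^{2}}$, then invoke the inductive hypothesis on the remaining $(m-1)$-fold inner iterated integral. That alternative route works in general, but the cube argument is the cleanest option given the non-negativity hypothesis actually present in the lemma.
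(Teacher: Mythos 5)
Your proposal is correct and follows essentially the same route as the paper's proof: extend the nested integration limits from the simplex to the full interval $[0,T]$ (using non-negativity), apply Cauchy--Schwarz to each resulting one-dimensional factor, and then extend the $L^{2}$ norms out to $[0,\infty)$. The Tonelli factorization you invoke is implicit in the paper's step of replacing the upper limits $t_{i-1}$ by $T$, so there is no substantive difference.
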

\begin{IEEEproof}
	Since the $g_{i}$'s are non-negative functions, we get the inequalities
	\begin{subequations}
	\begin{align}
		&\Upsilon_{T}[g_{1},g_{2},\cdots g_{m}]\nonumber\\
		& \leq \int_{0}^{T}\rmd t_{1}\,g_{1}(t_{1})\int_{0}^{T}\rmd t_{2}\,g_{2}(t_{2})\cdots\int_{0}^{T}\rmd t_{m}\,g_{m}(t_{m})\\
		& \leq T^{m/2}\prod_{i=1}^{m}\left(\int_{0}^{T}\rmd t_{i}\,g_{i}^{2}(t_{i})\right)^{\frac{1}{2}}\\
		& \leq T^{m/2}\|g_{1}\|_{L^{2}}\cdots\|g_{m}\|_{L^{2}}
	\end{align}
	\end{subequations}
	by extension of the integrals out to the interval $[0,T]$, followed by application of the Cauchy-Schwarz inequality, and finally extension out to $[0,\infty)$.
\end{IEEEproof}

\begin{lemma}
	If $f, g_{1}, \dots, g_{m}\in L^{2}(\mathbb{R}_{+};\mathbb{R}_{+})$ are non-negative square-integrable functions on $[0,\infty)$, then 
	\begin{align}
		&\Upsilon_{T}[\underbrace{f,\dots, f}_{\beta_{0} \text{ terms}},g_{1},\underbrace{f,\dots, f}_{\beta_{1} \text{ terms}},g_{2},f,\dots,f,g_{m},\underbrace{f,\dots, f}_{\beta_{m} \text{ terms}}] \nonumber\\
		&\qquad  \leq \frac{T^{\frac{1}{2}\sum\beta_{i}}}{\prod_{i=0}^{m}\beta_{i}!}\|f\|_{L^{2}}^{\sum\beta_{i}}\Upsilon_{T}[g_{1},g_{2},\cdots g_{m}]	
	\end{align}
\label{lem:dupsDysonBound}
\end{lemma}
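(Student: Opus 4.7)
The plan is to express $\Upsilon_{T}$ as a single multiple integral over an ordered simplex, exploit the symmetry of $f\otimes\cdots\otimes f$ within each consecutive block of $f$'s to collapse that block into a power of $\int f$ (this is just Lemma~\ref{lem:DysonFactorial} applied on a subinterval), and then bound each such power by $\bigl(\sqrt{T}\|f\|_{L^{2}}\bigr)^{\beta_{i}}$ via Cauchy--Schwarz. What remains factors cleanly into $\Upsilon_{T}[g_{1},\dots,g_{m}]$.

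Concretely, I would label the dummy variables as follows: name the $f$-variables in the $i$-th block $t^{(i)}_{1},\dots,t^{(i)}_{\beta_{i}}$ for $i=0,1,\dots,m$, name the $g_{i}$-variable $s_{i}$, and adopt the boundary conventions $s_{0}:=T$, $s_{m+1}:=0$. By the definition of $\Upsilon_{T}$ (with Fubini applicable thanks to non-negativity), the left-hand side equals
\begin{equation}
\int_{\Delta}\Bigl(\prod_{i,j}f(t^{(i)}_{j})\Bigr)\Bigl(\prod_{k=1}^{m}g_{k}(s_{k})\Bigr)\,\rmd t\,\rmd s,
\end{equation}
where $\Delta$ is the open simplex in which the variables appear in strictly decreasing order matching the nesting. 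Equivalently, for frozen $T>s_{1}>\cdots>s_{m}>0$, the $f$-variables of the $i$-th block range over the subsimplex $s_{i+1}<t^{(i)}_{\beta_{i}}<\dots<t^{(i)}_{1}<s_{i}$. Performing those block-wise inner integrations first, the symmetry of $f(t^{(i)}_{1})\cdots f(t^{(i)}_{\beta_{i}})$ in its arguments together with the argument of Lemma~\ref{lem:DysonFactorial} (now on the interval $(s_{i+1},s_{i})$) yields
\begin{equation}
\int_{s_{i+1}<t^{(i)}_{\beta_{i}}<\dots<t^{(i)}_{1}<s_{i}}\prod_{j=1}^{\beta_{i}}f(t^{(i)}_{j})\,\rmd t^{(i)}_{1}\cdots\rmd t^{(i)}_{\beta_{i}}=\frac{1}{\beta_{i}!}\Bigl(\int_{s_{i+1}}^{s_{i}}f(t)\,\rmd t\Bigr)^{\beta_{i}}.
\end{equation}

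Since $f\geq 0$, $\int_{s_{i+1}}^{s_{i}}f\leq\int_{0}^{T}f\leq\sqrt{T}\|f\|_{L^{2}}$ by Cauchy--Schwarz, so each block contributes the $s$-independent upper bound $T^{\beta_{i}/2}\|f\|_{L^{2}}^{\beta_{i}}/\beta_{i}!$. Pulling these constants out leaves the integral of $g_{1}(s_{1})\cdots g_{m}(s_{m})$ over $\{0<s_{m}<\cdots<s_{1}<T\}$, which is precisely $\Upsilon_{T}[g_{1},\dots,g_{m}]$, establishing the claim. I do not anticipate any deep technical obstacle here: the argument reduces to the symmetrization trick of Lemma~\ref{lem:DysonFactorial} plus Cauchy--Schwarz, and the only care needed is the bookkeeping of the nested variables and the boundary conventions $s_{0}=T$, $s_{m+1}=0$, together with the observation that non-negativity of $f$ and the $g_{k}$ underwrites both the Fubini rearrangement and the monotone bound $\int_{s_{i+1}}^{s_{i}}f\leq\int_{0}^{T}f$.
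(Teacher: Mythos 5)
Your proposal is correct and follows essentially the same route as the paper: Fubini to reorganize the nested integral, Lemma~\ref{lem:DysonFactorial} applied block-wise on the subintervals between consecutive $g$-variables to produce the $\frac{1}{\beta_{i}!}\bigl(\int_{s_{i+1}}^{s_{i}}f\bigr)^{\beta_{i}}$ factors, then the monotone extension $\int_{s_{i+1}}^{s_{i}}f\leq\int_{0}^{T}f$ and Cauchy--Schwarz to reach $T^{\beta_{i}/2}\|f\|_{L^{2}}^{\beta_{i}}/\beta_{i}!$, leaving exactly $\Upsilon_{T}[g_{1},\dots,g_{m}]$. No gaps; the bookkeeping with $s_{0}=T$, $s_{m+1}=0$ matches the paper's use of the indices $\sigma_{i}$.
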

\begin{IEEEproof}
Let $\sigma_{i} = \beta_{0}+\dots +\beta_{i} + i +1$ for $i=0,\dots,m$.  Then, using Fubini's theorem, we may rearrange the order of integration as follows:
	\begin{subequations}
	\begin{align}
		\lefteqn{\Upsilon_{T}[\underbrace{f,\dots, f}_{\beta_{0} \text{ terms}},g_{1},\underbrace{f,\dots, f}_{\beta_{1} \text{ terms}},g_{2},f,\dots,f,g_{m},\underbrace{f,\dots, f}_{\beta_{m} \text{ terms}}]}\nonumber\\
		 & = \int_{0}^{T}\rmd t_{\sigma_{0}}\,g_{1}(t_{\sigma_{0}})\int_{0}^{t_{\sigma_{0}}}\rmd t_{\sigma_{1}}\,g_{2}(t_{\sigma_{1}})\cdots\nonumber\\
		 & \qquad\int_{0}^{t_{\sigma_{m-2}}}\hspace{-10pt}\rmd t_{\sigma_{m-1}}\,g_{m}(t_{\sigma_{m-1}})\int_{t_{\sigma_{0}}}^{T}\rmd t_{1}\,f(t_{1})\int_{t_{\sigma_{0}}}^{t_{1}}\rmd t_{2}\,f(t_{2})\nonumber\\
		 & \quad\cdots\int_{t_{\sigma_{0}}}^{t_{\sigma_{0}-2}}\rmd t_{\sigma_{0}-1}\,f(t_{\sigma_{0}-1})\int_{t_{\sigma_{1}}}^{t_{\sigma_{0}}}\rmd t_{\sigma_{0}+1}\,f(t_{\sigma_{0}+1})\nonumber\\
		 & \qquad\int_{t_{\sigma_{1}}}^{t_{\sigma_{0}+1}}\rmd t_{\sigma_{0}+2}\,f(t_{\sigma_{0}+2})\cdots\int_{t_{\sigma_{1}}}^{t_{\sigma_{1}-2}}\rmd t_{\sigma_{1}-1}\,f(t_{\sigma_{1}-1})\nonumber\\
		 & \quad\cdots\int_{t_{\sigma_{m-1}}}^{t_{\sigma_{m-2}}}\hspace{-16pt}\rmd t_{\sigma_{m-2}+1}\,f(t_{\sigma_{m-2}+1})\int_{t_{\sigma_{m-1}}}^{t_{\sigma_{m-2}+1}}\hspace{-26pt}\rmd t_{\sigma_{m-2}+2}\,f(t_{\sigma_{m-2}+2})\nonumber\\
		 &\quad\cdots\int_{t_{\sigma_{m-1}}}^{t_{\sigma_{m-1}-2}}\hspace{-22pt}\rmd t_{\sigma_{m-1}-1}\,f(t_{\sigma_{m-1}-1})\int_{0}^{t_{\sigma_{m-1}}}\hspace{-19pt}\rmd t_{\sigma_{m-1}+1}\,f(t_{\sigma_{m-1}+1})\nonumber\\
		 & \quad \cdots\int_{0}^{t_{\sigma_{m}-2}}\rmd t_{\sigma_{m}-1}\,f(t_{\sigma_{m}-1})\\
		 & = \frac{1}{\prod_{i=0}^{m}\beta_{i}!}\int_{0}^{T}\rmd t_{\sigma_{0}}\,g_{1}(t_{\sigma_{0}})\int_{0}^{t_{\sigma_{0}}}\rmd t_{\sigma_{1}}\,g_{2}(t_{\sigma_{1}})\cdots\nonumber\\
		 & \quad \cdots\int_{0}^{t_{\sigma_{m-2}}}\rmd t_{\sigma_{m-1}}\,g_{m}(t_{\sigma_{m-1}}) \left(\int_{t_{\sigma_{0}}}^{T}f(t)\,\rmd t\right)^{\beta_{0}}\times\nonumber\\
		 & \qquad \times\left(\int_{t_{\sigma_{1}}}^{t_{\sigma_{0}}}f(t)\,\rmd t\right)^{\beta_{1}}\cdots \left(\int_{0}^{t_{\sigma_{m-1}}} f(t)\,\rmd t\right)^{\beta_{m}}
	\end{align}
	\end{subequations}
	where the last step follows from Lemma \ref{lem:DysonFactorial}.	Then, since $f$ is a non-negative function, we get the inequality
	\begin{subequations}
	\begin{align}
		\lefteqn{\Upsilon_{T}[\underbrace{f,\dots, f}_{\beta_{0} \text{ terms}},g_{1},\underbrace{f,\dots, f}_{\beta_{1} \text{ terms}},g_{2},f,\dots,f,g_{m},\underbrace{f,\dots, f}_{\beta_{m} \text{ terms}}]}\nonumber\\
		 & \leq \frac{1}{\prod_{i=0}^{m}\beta_{i}!}\left(\int_{0}^{T}f(t)\,\rmd t\right)^{\sum_{i=0}^{m}\beta_{i}}\times\nonumber\\
		 & \qquad \times\int_{0}^{T}\rmd t_{1}\,g_{1}(t_{1})\int_{0}^{t_{1}}\rmd t_{2}\,g_{2}(t_{2})\cdots\int_{0}^{t_{m-1}}\rmd t_{m}\,g_{m}(t_{m})\\
		 & \leq \frac{T^{\frac{1}{2}\sum\beta_{i}}}{\prod_{i=0}^{m}\beta_{i}!}\big\|f\big\|_{L^{2}}^{\sum\beta_{i}}\Upsilon_{T}[g_{1},g_{2},\dots, g_{m}]
	\end{align}
	\end{subequations}
	by first extending the $f$ integrals to the interval $[0,T]$. and then invoking the Cauchy-Schwarz inequality.
\end{IEEEproof}

\begin{definition}
	For $m=0,1,2,\dots$, let $\mathcal{B}^{m}\big(\rmT\Herm ; \BH \big)$ denote the space of bounded $m$-multilinear operators from $\big(\rmT\Herm \big)^{m} = \rmT\Herm \times \rmT\Herm \times\dots\times \rmT\Herm $ to $\BH $, with the norm \begin{equation}\|A\| = \sup_{\{\|\delta H_{j}\|\neq 0\}}\frac{\|A(\delta H_{1},\dots,\delta H_{m})\|}{\|\delta H_{1}\|\cdots\|\delta H_{m}\|}\end{equation} for each $A\in\mathcal{B}^{m}\big(\rmT\Herm ; \BH \big)$.  Then let $\varphi_{T,m}:\mathbb{H}\rightarrow \mathcal{B}^{m}\big(\rmT\Herm ; \BH \big)$ be defined by 
	\begin{align}
		& \varphi_{T,m}(H)(\delta H_{1},\delta H_{2}, \cdots, \delta H_{m})\nonumber\\
		& := \sum_{n=m}^{\infty}\;\sum_{a_{0} + \dots + a_{m} = n-m}\;\sum_{\pi\in S_{m}}\left(-\frac{i}{\hbar}\right)^{n}\times\nonumber\\
		& \qquad \times \Upsilon_{T}[\underbrace{H,\dots,H}_{a_{0} \text{ terms}}, \delta H_{\pi(1)},\underbrace{H,\dots,H}_{a_{1} \text{ terms}}, \delta H_{\pi(2)}, H, \dots\nonumber\\
		& \qquad \qquad \qquad \dots, H, \delta H_{\pi(m)}, \underbrace{H,\dots,H}_{a_{m} \text{ terms}}]
	\end{align}
	where $S_{m}$ denotes the symmetric group on $m$ elements (i.e., the group of permutations of $m$ elements).  For $m,q=0,1,2,\dots$, let $\Psi_{T,m,q}:\mathbb{H}\oplus\rmT\mathbb{H}\rightarrow \mathcal{B}^{m}\big(\rmT\Herm ; \BH \big)$ be defined by
	\begin{align}
		&\Psi_{T,m,q}(H,\delta H)(\delta H_{1},\delta H_{2}, \cdots, \delta H_{m})\nonumber\\
		& := \sum_{n=m+q}^{\infty}\;\sum_{\jmdstack{a_{0} + \dots + a_{m+q}}{= n-m-q}}\;\sum_{b_{0} + \dots + b_{m} = q}\;\sum_{\pi\in S_{m}}\left(-\frac{i}{\hbar}\right)^{n}\times\nonumber\\
		& \qquad \times\Upsilon_{T}[\underbrace{H,\dots,H}_{a_{0} \text{ terms}}, A_{1},\underbrace{H,\dots,H}_{a_{1} \text{ terms}}, A_{2}, H, \dots\nonumber\\
		& \qquad \qquad \qquad \dots, H, A_{m+q}, \underbrace{H,\dots,H}_{a_{m+q} \text{ terms}}],
	\end{align}
	where 
	\begin{align}
		& \{A_{1},A_{2},\dots, A_{m+q}\}\nonumber\\
		& = \big\{\underbrace{\delta H,\dots,\delta H}_{b_{0} \text{ terms}}, \delta H_{\pi(1)},\underbrace{\delta H,\dots,\delta H}_{b_{1} \text{ terms}}, \delta H_{\pi(2)}, \delta H, \dots\nonumber\\
		& \qquad \qquad \dots, \delta H, \delta H_{\pi(m)}, \underbrace{\delta H,\dots,\delta H}_{b_{m} \text{ terms}}\big\}
	\end{align}
\end{definition}

\begin{lemma}
	$\varphi_{T,m}$ and $\Psi_{T,m,q}$ are well-defined since their defining sums converge absolutely, and for each $H\in\mathbb{H}$ and $\delta H\in \rmT \mathbb{H}$, $\varphi_{T,m}(H)$ and $\Psi_{T,m,q}(H)(\delta H)$ are bounded $m$-multilinear operators.
\end{lemma}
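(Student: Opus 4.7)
The plan is to bound the Hilbert--Schmidt norm of each term in the defining series by a scalar Dyson-type integral, then successively apply Lemmas~\ref{lem:dupsDysonBound}, \ref{lem:nodupDysonBound}, and~\ref{lem:factorialSeries} to dispatch the combinatorial structure and recognize the resulting tail as an ordinary exponential series. Multilinearity of $\Phi_{T,m}(H)$ and $\Psi_{T,m,q}(H,\delta H)$ in $(\delta H_{1},\ldots,\delta H_{m})$ is immediate from the $\mathbb{C}$-linearity of $\Upsilon_{T}$ in each of its slots, so the real content is a uniform norm bound that simultaneously establishes absolute convergence and boundedness of the resulting operator.

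First I would use submultiplicativity of the Hilbert--Schmidt norm together with the triangle inequality for iterated integrals to obtain $\|\Upsilon_{T}[A_{1},\ldots,A_{n}]\|_{HS} \leq \Upsilon_{T}[\|A_{1}(\cdot)\|_{HS},\ldots,\|A_{n}(\cdot)\|_{HS}]$, reducing everything to scalar integrands. Set $f(t)=\|H(t)\|_{HS}$ and $g_{k}(t)=\|\delta H_{k}(t)\|_{HS}$. Each summand in $\Phi_{T,m}(H)(\delta H_{1},\ldots,\delta H_{m})$ then has exactly the shape required by Lemma~\ref{lem:dupsDysonBound}, with the $H$'s as the repeated $f$'s and the $g_{\pi(k)}$'s as the distinguished entries. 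A single application of Lemma~\ref{lem:dupsDysonBound} followed by Lemma~\ref{lem:nodupDysonBound} on the residual $\Upsilon_{T}[g_{\pi(1)},\ldots,g_{\pi(m)}]$ yields the per-term bound
\[
\hbar^{-n}\,\frac{T^{n/2}}{\prod_{i=0}^{m} a_{i}!}\,\|H\|_{L^{2}}^{n-m}\,\prod_{k=1}^{m}\|\delta H_{k}\|_{L^{2}}.
\]
Summing over the compositions $a_{0}+\dots+a_{m}=n-m$ via Lemma~\ref{lem:factorialSeries} (with $r=m+1$, $p=n-m$) contributes $(m+1)^{n-m}/(n-m)!$, summing over $\pi\in S_{m}$ contributes $m!$, and reindexing by $k=n-m$ identifies the remaining series as $\exp\bigl((m+1)T^{1/2}\|H\|_{L^{2}}/\hbar\bigr)$. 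This produces the explicit uniform bound
\[
\|\Phi_{T,m}(H)\| \,\leq\, m!\,\bigl(T^{1/2}/\hbar\bigr)^{m}\,\exp\!\bigl((m+1)T^{1/2}\|H\|_{L^{2}}/\hbar\bigr),
\]
which is finite for every $H$ and hence simultaneously gives absolute convergence of the defining series in $\mathcal{B}^{m}\bigl(\rmT\mathbb{H}(N);\mathbb{C}^{N\times N}\bigr)$ and boundedness of the limit operator.

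The $\Psi_{T,m,q}$ case proceeds along the same lines but requires two successive applications of Lemma~\ref{lem:dupsDysonBound}. The first peels off the blocks of $H$'s grouped by the $a_{i}$'s, leaving an $\Upsilon_{T}$ whose entries are the repeated $\delta H$'s and the $\delta H_{\pi(k)}$'s arranged according to the $b_{i}$'s; the second peels off these $\delta H$'s, leaving $\Upsilon_{T}[g_{\pi(1)},\ldots,g_{\pi(m)}]$ to be handled by Lemma~\ref{lem:nodupDysonBound}. The two compositional sums, over $a_{0}+\dots+a_{m+q}=n-m-q$ and over $b_{0}+\dots+b_{m}=q$, are each dispatched by Lemma~\ref{lem:factorialSeries} with parameters $(r,p)=(m+q+1, n-m-q)$ and $(m+1, q)$ respectively, and reindexing by $k=n-m-q$ again yields an exponential in $\|H\|_{L^{2}}$, this time with rate constant $m+q+1$, together with a polynomial prefactor of degree $q$ in $\|\delta H\|_{L^{2}}$. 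The resulting closed-form estimate is finite for every $(H,\delta H)$, so $\Psi_{T,m,q}(H,\delta H)$ is likewise a well-defined bounded $m$-multilinear operator.

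The main obstacle is purely combinatorial bookkeeping in the $\Psi_{T,m,q}$ case: the positions of the $\delta H_{\pi(k)}$'s inside the overall integrand depend on both the $a_{i}$'s and the $b_{i}$'s, so one must carefully label which kind of ``duplicate'' (the $H$'s or the $\delta H$'s) is being stripped at each stage in order for Lemma~\ref{lem:dupsDysonBound} to apply cleanly in the required form. Once the indexing is set up, the remaining estimates are mechanical, and identifying the tail as an exponential series in $\|H\|_{L^{2}}$ makes the boundedness and convergence conclusions transparent.
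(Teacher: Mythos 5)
Your proposal is correct and follows essentially the same route as the paper: reduce to scalar Dyson integrals via submultiplicativity, apply Lemma~\ref{lem:dupsDysonBound} (twice for $\Psi_{T,m,q}$) and Lemma~\ref{lem:nodupDysonBound} for the per-term bound, then Lemma~\ref{lem:factorialSeries} and the $m!$ permutation count to sum the series into an exponential, yielding the same explicit operator-norm bounds. The only cosmetic difference is that you treat $\Phi_{T,m}$ by a separate one-pass argument, whereas the paper obtains it as the special case $\Phi_{T,m}(H)=\Psi_{T,m,0}(H,\delta H)$.
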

\begin{IEEEproof}
	Let $f(t) = \|H(t)\|$, $g_{i}(t) = \|A_{i}(t)\|$, $h_{j}(t) = \|\delta H_{j}(t)\|$, and $h(t) = \|\delta H(t)\|$.  Then
	\begin{subequations}
	\begin{align}
		\lefteqn{\|\Upsilon_{T}[\underbrace{H,\dots,H}_{a_{0} \text{ terms}}, A_{1}, H, \dots, H, A_{m+q}, \underbrace{H,\dots,H}_{a_{m+q} \text{ terms}}]\|}\nonumber\\
		& \leq \Upsilon_{T}[\underbrace{f,\dots,f}_{a_{0} \text{ terms}}, g_{1},\underbrace{f,\dots,f}_{a_{1} \text{ terms}}, g_{2}, f, \dots, f, g_{m+q}, \underbrace{f,\dots,f}_{a_{m+q} \text{ terms}}]\\
		& \leq \frac{T^{\frac{n-m-q}{2}}}{\prod_{i=0}^{m+q} a_{i}!}\|f\|_{L^{2}}^{n-m-q}\Upsilon_{T}[g_{1},g_{2},\dots,g_{m+q}]\\
		& = \frac{T^{\frac{n-m-q}{2}}}{\prod_{i=0}^{m+q} a_{i}!}\|H\|_{L^{2}}^{n-m-q}\times\nonumber\\
		& \qquad \times\Upsilon_{T}[\underbrace{h,\dots,h}_{b_{0} \text{ terms}}, h_{\pi(1)},\underbrace{h,\dots,h}_{b_{1} \text{ terms}}, h_{\pi(2)}, h, \dots\nonumber\\
		&\qquad \qquad \qquad \dots, h, h_{\pi(m)}, \underbrace{h,\dots,h}_{b_{m} \text{ terms}}]\\
		& \leq \frac{T^{\frac{n-m}{2}}\|H\|_{L^{2}}^{n-m-q}\|h\|_{L^{2}}^{q}}{\prod_{i=0}^{m+q} a_{i}!\prod_{j=0}^{m} b_{j}!}\Upsilon_{T}[h_{\pi(1)}, h_{\pi(2)}, \dots,  h_{\pi(m)}]\\
		& \leq \frac{T^{\frac{n}{2}}\|H\|_{L^{2}}^{n-m-q}\|h\|_{L^{2}}^{q}}{\prod_{i=0}^{m+q} a_{i}!\prod_{j=0}^{m} b_{j}!}\|h_{1}\|_{L^{2}}\|h_{2}\|_{L^{2}}\cdots\|h_{m}\|_{L^{2}}\\
		& \leq \frac{T^{\frac{n}{2}}\|H\|_{L^{2}}^{n-m-q}\|\delta H\|_{L^{2}}^{q}}{\prod_{i=0}^{m+q} a_{i}!\prod_{j=0}^{m} b_{j}!}\|\delta H_{1}\|_{L^{2}}\|\delta H_{2}\|_{L^{2}}\cdots\|\delta H_{m}\|_{L^{2}}
	\end{align}
	\end{subequations}
	by appealing to Lemmas \ref{lem:nodupDysonBound} and \ref{lem:dupsDysonBound}.  Recall, as a special case of the multinomial theorem \cite{Abramowitz1972}, that
\begin{equation}
	\sum_{c_{1}+\dots+c_{r} = p}\frac{p!}{\prod_{i=1}^{r}c_{i}!} = r^{p}.
\end{equation}
	It follows that 
	\begin{subequations}
	\begin{align}
		\lefteqn{\|\Psi_{T,m,q}(H,\delta H)(\delta H_{1},\delta H_{2},\dots, \delta H_{m})\|}\nonumber\\
		& \leq \sum_{n=m+q}^{\infty}\;\sum_{\jmdstack{a_{0} + \dots + a_{m+q} = n-m-q}{b_{0} + \dots + b_{m} = q}}\;\sum_{\pi\in S_{m}}\frac{T^{\frac{n}{2}}}{\hbar^{n}\prod_{i=0}^{m+q} a_{i}!\prod_{j=0}^{m} b_{j}!}\times\nonumber\\
		&  \qquad \quad\times\|H\|_{L^{2}}^{n-m-q}\|\delta H\|_{L^{2}}^{q}\prod_{j=1}^{m}\|\delta H_{j}\|_{L^{2}}\\
		& = \sum_{n=m+q}^{\infty}\frac{m!T^{\frac{n}{2}}(m+q+1)^{n-m-q}(m+1)^{q}}{\hbar^{n}(n-m-q)!q!}\|H\|_{L^{2}}^{n-m-q}\times\nonumber\\
		& \qquad\qquad \times\|\delta H\|_{L^{2}}^{q}\prod_{j=1}^{m}\|\delta H_{j}\|_{L^{2}}\\
		& = \frac{m!T^{\frac{m+q}{2}}(m+1)^{q}}{\hbar^{m+q}q!}\exp\left(\frac{(m+q+1)\sqrt{T}}{\hbar}\|H\|_{L^{2}}\right)\times\nonumber\\
		& \qquad \times\|\delta H\|_{L^{2}}^{q}\prod_{j=1}^{m}\|\delta H_{j}\|_{L^{2}}\label{eqn:psiBound}
	\end{align}
	\end{subequations}
	so that the sum converges absolutely, $\Psi_{T,m,q}(H, \delta H)$ is a bounded $m$-multilinear operator for each $H\in\mathbb{H}$ and $\delta H\in\rmT\mathbb{H}$, and 
\begin{align}
	& \|\Psi_{T,m,q}(H,\delta H)\|\nonumber\\
	& \leq \frac{m!T^{\frac{m+q}{2}}(m+1)^{q}}{\hbar^{m+q}q!}\exp\!\left(\!\frac{(m+q+1)\sqrt{T}}{\hbar}\|H\|_{L^{2}}\!\!\right)\|\delta H\|_{L^{2}}^{q}\nonumber\\
	& <\infty.
\end{align}
	Since $\varphi_{T,m}(H) = \Psi_{T,m,0}(H,\delta H)$, this conclusion also holds for $\varphi_{T,m}(H)$.
\end{IEEEproof}

\begin{theorem}
$Z_{T}$ is infinitely Fr\'echet differentiable, i.e. $C^{\infty}$, everywhere on $\mathbb{H}$.
\end{theorem}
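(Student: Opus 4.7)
The plan is to prove by induction on $m\geq 0$ that $\Phi_{T,m}$ is the $m$-th Fr\'echet derivative of $Z_T$ on all of $\mathbb{H}$. The base case $m=0$ is immediate since $Z_T(H)$ is, by definition, the Dyson series, which coincides term-by-term with $\Phi_{T,0}(H)$; absolute convergence was established in the previous lemma (specialized to $m=q=0$). For the inductive step, assuming $\Phi_{T,m}$ equals the $m$-th Fr\'echet derivative of $Z_T$, I would show that $\Phi_{T,m}$ is itself Fr\'echet differentiable at every $H\in\mathbb{H}$, with derivative $d_H\Phi_{T,m}(\delta H) = \Phi_{T,m+1}(H)(\,\cdot\,,\dots,\,\cdot\,,\delta H)$---the last slot of the $(m+1)$-multilinear operator being reserved for the perturbation $\delta H$.

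The core of the inductive step is the Taylor-like expansion
\[
\Phi_{T,m}(H+\delta H)(\delta H_1,\dots,\delta H_m) = \sum_{q=0}^{\infty}\Psi_{T,m,q}(H,\delta H)(\delta H_1,\dots,\delta H_m),
\]
proved by expanding every factor of $H+\delta H$ inside each $\Upsilon_T$ term of $\Phi_{T,m}(H+\delta H)$ into its $H$ and $\delta H$ parts (each $\Upsilon_T$ is multilinear in its matrix arguments) and collecting contributions according to the total number $q$ of $\delta H$-insertions and to the multi-index $(b_0,\dots,b_m)$ recording how those insertions are distributed into the $m+1$ gaps between the $\delta H_j$ slots. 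Absolute convergence of the resulting double sum, guaranteed by the preceding lemma, justifies the rearrangement. The $q=0$ term reproduces $\Phi_{T,m}(H)$, the $q=1$ term is linear in $\delta H$, and the operator-norm estimate
\[
\|\Psi_{T,m,q}(H,\delta H)\|\leq \frac{m!\,T^{(m+q)/2}(m+1)^q}{\hbar^{m+q}\,q!}\exp\!\left(\tfrac{(m+q+1)\sqrt{T}}{\hbar}\|H\|_{L^2}\right)\|\delta H\|_{L^2}^{q}
\]
forces the tail $\sum_{q\geq 2}\Psi_{T,m,q}(H,\delta H)$ to be $O(\|\delta H\|_{L^2}^2)$, hence $o(\|\delta H\|_{L^2})$, since the geometric growth in $q$ is tamed by the $q!$ in the denominator.

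Fr\'echet differentiability of $\Phi_{T,m}$ at $H$ with derivative $\Psi_{T,m,1}(H,\cdot)$ then follows immediately, and a small combinatorial check identifies $\Psi_{T,m,1}(H,\delta H)(\delta H_1,\dots,\delta H_m)$ with $\Phi_{T,m+1}(H)(\delta H_1,\dots,\delta H_m,\delta H)$: the single $\delta H$ in $\Psi_{T,m,1}$ occupies one of $m+1$ positions (the constraint $b_0+\dots+b_m=1$), and combined with the $S_m$ symmetrization this produces exactly the $(m+1)!$ terms of the $S_{m+1}$ symmetrization appearing in $\Phi_{T,m+1}$ with the last argument pinned to $\delta H$. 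The main obstacle is precisely this combinatorial bookkeeping---verifying that the $(a_i,b_j)$ parametrization of admissible interleavings labels the same configurations in both representations, and handling the absolutely convergent rearrangement carefully. With that identification, the induction closes; continuity of each derivative in $H$ follows from applying the same lemma-based bound to differences $\Phi_{T,m}(H_1)-\Phi_{T,m}(H_2)$; and we conclude $Z_T\in C^\infty(\mathbb{H};\mathrm{U}(N))$. The corollary for $V_T$ then follows by composing $Z_T$ with the bounded affine map $\mathcal{E}\mapsto (t\mapsto H_0-\mu\mathcal{E}(t))$.
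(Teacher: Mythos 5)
Your proposal is correct and follows essentially the same route as the paper: expand $\Phi_{T,m}(H+\delta H)$ as $\sum_{q\geq 0}\Psi_{T,m,q}(H,\delta H)$ by multilinearity and absolute convergence, identify the $q=1$ term with $\Phi_{T,m+1}(H)(\cdot,\dots,\cdot,\delta H)$, and use the $\Psi_{T,m,q}$ operator-norm bound to show the $q\geq 2$ tail is $O(\|\delta H\|_{L^{2}}^{2})$, so each $\Phi_{T,m}$ is Fr\'echet differentiable with derivative $\Phi_{T,m+1}$ and $Z_{T}=\Phi_{T,0}$ is $C^{\infty}$. The only cosmetic difference is that you phrase it as an induction and spell out the combinatorial identification of $\Psi_{T,m,1}$ with $\Phi_{T,m+1}$, which the paper simply asserts.
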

\begin{IEEEproof}
	We begin by establishing that $\varphi_{T,m}$ is Fr\'echet differentiable for each $m=0,1,2,\dots$, with derivative $\varphi_{T,m+1}$.  Observe that 
	\begin{align}
		\varphi_{T,m+1}&(H)(\delta H_{1},\dots, \delta H_{m}, \delta H)\nonumber\\
		& = \Psi_{T,m,1}(H, \delta H)(\delta H_{1},\dots,\delta H_{m})
	\end{align} 
	and that since the defining sum for $\varphi_{T,m}(H+\delta H)$ converges absolutely, it may be rearranged as 
	\begin{align}
		\varphi_{T,m}&(H+\delta H)(\delta H_{1},\dots,\delta H_{m})\nonumber\\
		& = \sum_{q=0}^{\infty}\Psi_{T,m,q}(H,\delta H)(\delta H_{1}, \dots, \delta H_{m}).
	\end{align}  
	Then by appealing to the bound of $\Psi_{T,m,q}$ in \eqref{eqn:psiBound}, we get
	\begin{subequations}
	\begin{align}
		\lefteqn{\|\varphi_{T,m}(H+\delta H) - \varphi_{T,m}(H) - \varphi_{T,m+1}(H)(\cdot,\dots,\cdot,\delta H)\|}\nonumber\\
		 & = \sup_{\{\|\delta H_{j}\|=1\}}\big\|\varphi_{T,m}(H+\delta H)(\delta H_{1},\dots,\delta H_{m}) \nonumber\\
		 & \qquad \qquad \qquad \mbox{} - \varphi_{T,m}(H)(\delta H_{1},\dots,\delta H_{m})\nonumber\\
		 & \qquad \qquad \qquad \mbox{} - \varphi_{T,m+1}(H)(\delta H_{1},\dots,\delta H_{m},\delta H)\big\|\\
		 & = \sup_{\{\|\delta H_{j}\|= 1\}}\bigg\|\sum_{q=2}^{\infty}\Psi_{T,m,q}(H,\delta H)(\delta H_{1}, \dots, \delta H_{m})\bigg\|\\
		 & \leq \sup_{\{\|\delta H_{j}\|=1\}}\sum_{q=2}^{\infty}\big\|\Psi_{T,m,q}(H,\delta H)(\delta H_{1}, \dots, \delta H_{m})\big\|\\
		 & \leq \sum_{q=2}^{\infty}\frac{m!T^{\frac{m+q}{2}}(m+1)^{q}}{\hbar^{m+q}q!}\times\nonumber\\
		 & \qquad \times\exp\left(\frac{(m+q+1)\sqrt{T}}{\hbar}\|H\|_{L^{2}}\right)\|\delta H\|_{L^{2}}^{q}\\
		 & = \frac{m!T^{\frac{m+2}{2}}(m+1)^{2}}{\hbar^{m+2}}\exp\left(\frac{(m+3)\sqrt{T}}{\hbar}\|H\|_{L^{2}}\right)\|\delta H\|_{L^{2}}^{2}\nonumber\\
		 & \qquad \times\exp\left[\frac{(m+1)\sqrt{T}}{\hbar}\|\delta H\|_{L^{2}}\exp\left(\frac{\sqrt{T}}{\hbar}\|H\|_{L^{2}}\right)\right].
	\end{align}
	\end{subequations}
	Hence, 
	\begin{align}
		\lim_{\|\delta H\|\to 0}\frac{1}{\|\delta H\|}\big\|&\varphi_{T,m}(H+\delta H) - \varphi_{T,m}(H)\nonumber\\
		& - \varphi_{T,m+1}(H)(\cdot,\dots,\cdot,\delta H)\big\| = 0
	\end{align} 
	and therefore $\varphi_{T,m}$ is Fr\'echet differentiable with derivative $\varphi_{T,m+1}$.  Since $Z_{T} = \varphi_{T,0}$, this implies that $Z_{T}$ is infinitely Fr\'echet differentiable, and that the $m$'th derivative of $Z_{T}$ is $\varphi_{T,m}$.
\end{IEEEproof}

\begin{lemma}
	Let $\hat{\mathcal{H}}:\mathbb{K}\rightarrow \Herm $ be defined by $\hat{\mathcal{H}}(\mathcal{E})(t) = H_{0}-\mu\mathcal{E}(t)$ for some fixed Hermitian operators $H_{0}$ and $\mu$ in $\MH$.  Then $\hat{\mathcal{H}}$ is infinitely Fr\'echet differentiable, i.e. $C^{\infty}$.
\end{lemma}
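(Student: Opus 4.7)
The plan is to exploit the fact that $\hat{\mathcal{H}}$ is an affine map between Banach spaces, for which infinite Fr\'echet differentiability is essentially automatic. First I would decompose $\hat{\mathcal{H}}(\mathcal{E}) = H_{0,T} + L(\mathcal{E})$, where $H_{0,T}\in\mathbb{H}(N)$ denotes the constant Hamiltonian $H_0$ (implicitly cut off so as to lie in the relevant $L^2$ space over $[0,T]$) and $L:\mathbb{K}\rightarrow\mathbb{H}(N)$ is defined by $L(\mathcal{E})(t) := -\mu\mathcal{E}(t)$. Linearity of $L$ is immediate, and boundedness follows from a one-line estimate:
\begin{equation}
\|L(\mathcal{E})\|_{L^{2}}^{2} = \int_{0}^{\infty}\|\mu\mathcal{E}(t)\|_{HS}^{2}\,\rmd t = \|\mu\|_{HS}^{2}\int_{0}^{\infty}\mathcal{E}(t)^{2}\,\rmd t = \|\mu\|_{HS}^{2}\,\|\mathcal{E}\|_{L^{2}}^{2},
\end{equation}
so $L\in\mathcal{B}(\mathbb{K};\mathbb{H}(N))$ with $\|L\|\leq\|\mu\|_{HS}$.

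Next I would compute the increment directly from the definition of the Fr\'echet derivative. For every $\mathcal{E},\delta\mathcal{E}\in\mathbb{K}$, the affine structure yields the \emph{exact} identity $\hat{\mathcal{H}}(\mathcal{E}+\delta\mathcal{E}) - \hat{\mathcal{H}}(\mathcal{E}) - L(\delta\mathcal{E}) = 0$, so the remainder in the difference quotient is identically zero. Therefore $\hat{\mathcal{H}}$ is Fr\'echet differentiable at every $\mathcal{E}$ and its derivative is the constant map $\rmd_{\mathcal{E}}\hat{\mathcal{H}} = L$, independent of the base point.

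Finally, I would iterate: since $\mathcal{E}\mapsto\rmd_{\mathcal{E}}\hat{\mathcal{H}}$ is constant with value $L$ as a map into the Banach space $\mathcal{B}(\mathbb{K};\mathbb{H}(N))$, it is Fr\'echet differentiable with derivative equal to the zero operator in $\mathcal{B}^{2}(\mathbb{K};\mathbb{H}(N))$. By an obvious induction, all higher derivatives exist and vanish identically. Hence $\hat{\mathcal{H}}\in C^{\infty}(\mathbb{K};\mathbb{H}(N))$. The only point requiring any care is purely notational, namely fixing the convention by which the constant matrix $H_{0}$ is viewed as an element of an $L^{2}$-type space of Hamiltonians; once this is done, there is no genuine obstacle. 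In particular, combined with the chain rule and the preceding theorem on $Z_{T}$, this immediately gives the corollary that $V_{T} = Z_{T}\circ\hat{\mathcal{H}}$ is $C^{\infty}$ on $\mathbb{K}$.
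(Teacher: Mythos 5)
Your proposal is correct and follows essentially the same route as the paper's own proof: identify the constant bounded linear map $\delta\mathcal{E}\mapsto -\mu\,\delta\mathcal{E}(\cdot)$ as the derivative, observe that the affine structure makes the remainder vanish identically, and conclude that all higher derivatives are zero, hence $\hat{\mathcal{H}}$ is $C^{\infty}$. Your side remark about viewing the constant term $H_{0}$ as an element of the $L^{2}$-type space is a reasonable (and slightly more careful) bookkeeping point that the paper leaves implicit, but it does not change the argument.
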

\begin{IEEEproof}
	Let $\zeta:\mathbb{K}\rightarrow\mathcal{B}(\rmT\mathbb{K}; \rmT\Herm )$ be defined by $\zeta(\mathcal{E})(\delta \mathcal{E})(t) = -\mu\delta\mathcal{E}(t)$.  For each $\mathcal{E}\in\mathbb{K}$,  $\zeta(\mathcal{E})$ is linear, and $\|\zeta(\mathcal{E})(\delta\mathcal{E})\|_{L^{2}} = \|\mu\|_{\mathrm{HS}}\|\delta\mathcal{E}\|_{L^{2}}$, so $\zeta(\mathcal{E})$ is bounded.  Now, 
	\begin{equation} \lim_{\|\delta\mathcal{E}\|\to 0}\frac{\|\hat{\mathcal{H}}(\mathcal{E} + \delta\mathcal{E}) - \hat{\mathcal{H}}(\mathcal{E}) - \zeta(\mathcal{E})(\delta\mathcal{E})\|}{\|\delta\mathcal{E}\|} = 0\end{equation}
	so that $\zeta$ is the Fr\'echet derivative of $\hat{\mathcal{H}}$.  Since $\zeta$ is constant (i.e., $\zeta(\mathcal{E})$ is the same linear operator regardless of which $\mathcal{E}\in\mathbb{K}$ is input), the higher Fr\'echet derivatives also exist and are all equal to zero.
\end{IEEEproof}

\begin{theorem}
	$U_{T} = Z_{T}\circ\hat{\mathcal{H}}:\mathbb{K}\rightarrow \UH$ is a composition of $C^{\infty}$ maps and therefore is itself a $C^{\infty}$ map.
\end{theorem}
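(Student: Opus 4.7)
The plan is essentially to invoke the standard chain rule for Fréchet differentiability on Banach spaces together with the two previous results already established in this appendix. Specifically, the preceding theorem shows that $Z_{T}:\mathbb{H}(N)\rightarrow \mathbb{C}^{N\times N}$ is $C^{\infty}$, with its $m$-th Fréchet derivative at $H$ being the bounded $m$-multilinear operator $\Phi_{T,m}(H)$; the preceding lemma shows that $\hat{\mathcal{H}}:\mathbb{K}\rightarrow \mathbb{H}(N)$ is $C^{\infty}$, with first derivative the constant bounded linear map $\delta\mathcal{E}\mapsto -\mu\,\delta\mathcal{E}(\cdot)$ and all higher derivatives vanishing. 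So $V_{T} = Z_{T}\circ\hat{\mathcal{H}}$ is a composition of two infinitely Fréchet-differentiable maps between Banach spaces.

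First I would record that the Fréchet chain rule (see, e.g., any standard reference on calculus on Banach spaces such as Abraham--Marsden--Ratiu) implies that the composition of two $C^{1}$ maps between Banach spaces is $C^{1}$ with derivative given by the usual formula $\rmd_{\mathcal{E}}V_{T} = \rmd_{\hat{\mathcal{H}}(\mathcal{E})}Z_{T}\circ \rmd_{\mathcal{E}}\hat{\mathcal{H}}$. Then I would upgrade this to $C^{\infty}$ by induction on the differentiability order: if both factors are $C^{k}$, then the iterated chain-rule expression for the $k$-th derivative is a finite sum (indexed by partitions, i.e.\ the Faà di Bruno formula for Banach spaces) of continuous multilinear maps built by composing bounded multilinear derivatives of $Z_{T}$ and $\hat{\mathcal{H}}$, and is therefore itself continuous in $\mathcal{E}$; hence the composition is $C^{k+1}$ whenever both factors are $C^{k+1}$. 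Since both factors are $C^{\infty}$, this yields $V_{T}\in C^{\infty}$.

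To close the loop with the statement, I would remark that the image of $Z_{T}\circ\hat{\mathcal{H}}$ actually lies in $\UN\subset \mathbb{C}^{N\times N}$: for real-valued $\mathcal{E}$ the Hamiltonian $\hat{\mathcal{H}}(\mathcal{E})(t) = H_{0}-\mu\mathcal{E}(t)$ takes Hermitian values, so the Dyson series defining $Z_{T}\big(\hat{\mathcal{H}}(\mathcal{E})\big)$ produces a unitary propagator. Since $\UN$ is an embedded submanifold of $\mathbb{C}^{N\times N}$, smoothness of $V_{T}$ as a map into $\mathbb{C}^{N\times N}$ is equivalent to smoothness as a map into $\UN$.

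The only non-routine ingredient is the Banach-space chain rule itself, but it is entirely standard, and the bounds established in the preceding lemma (the explicit estimate $\|\Psi_{T,m,q}(H,\delta H)\|\leq \tfrac{m!T^{(m+q)/2}(m+1)^{q}}{\hbar^{m+q}q!}\exp\!\big(\tfrac{(m+q+1)\sqrt{T}}{\hbar}\|H\|_{L^{2}}\big)\|\delta H\|_{L^{2}}^{q}$) already guarantee local boundedness of each derivative of $Z_{T}$, so there is no subtle continuity issue to navigate. Thus the hardest part is really just notational bookkeeping, and the proof should be short.
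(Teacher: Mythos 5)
Your proposal is correct and follows the same route as the paper, which simply cites the preceding lemma (smoothness of $\hat{\mathcal{H}}$) and theorem (smoothness of $Z_{T}$) and concludes via the chain rule that the composition is $C^{\infty}$; your elaboration of the Banach-space chain rule and the remark that the image lies in $\UN$ just make explicit what the paper leaves implicit.
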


\bigskip

\section{Gradient of \texorpdfstring{$J_{G}$}{J\_G}}\label{sec:gradJG}
Several steps in Section \ref{sec:intrinsicDist} require differentiation of expressions involving the matrix logarithm.  Since the expressions to be differentiated are all similar, this appendix will demonstrate the computation of the gradient of the kinematic landscape $J_{G}$, as the other variations follow along similar lines.  To this end, we fix some target $W\in\UH$ and recall
\begin{equation}
	J_{G}(U):=\frac{1}{2}\|\log(U^{\dagger}W)\|_{\mathrm{HS}}^{2}.
\end{equation}
The differential of this function may then be written
\begin{subequations}
\begin{align}
	\rmd_{U}&J_{G}(\delta U) \nonumber\\
	& = \langle \log(U^{\dagger}W), \rmd_{U^{\dagger}W}\log(-U^{\dagger}\delta U U^{\dagger}W)\rangle_{\mathrm{HS}}\\
	& = \langle -U\big[\big(\rmd_{U^{\dagger}W}\log\big)^{*}(\log(U^{\dagger}W))\big]W^{\dagger}U, \delta U\rangle,
\end{align}
\end{subequations}
where $(\rmd_{U^{\dagger}W}\log)^{*}$ is the adjoint (super-)operator.  As a result, the gradient, which is the dual vector in $\rmT_{U}\UH$ of the differential functional, is given by
\begin{equation}
\grad J_{G}(U) = -U\big[\big(\rmd_{U^{\dagger}W}\log\big)^{*}(\log(U^{\dagger}W))\big]W^{\dagger}U
\end{equation}
It follows from the inverse relationship of the operator logarithm and exponential that $\exp\circ\log = \id$, whence $\rmd_{\log Z}\exp\circ \rmd_{Z}\log = \id$ for any $Z\in\UH$, and therefore
\begin{equation}
	\big(\rmd_{Z}\log\big)^{*} \circ \big(\rmd_{\log Z}\exp\big)^{*}= \id.\label{eqn:expLogAdj}
\end{equation}

Now, it is well-known \cite{Karplus1948, Mathias1992} that 
\begin{equation}
	\rmd_{\log Z}\exp(X) = \int_{0}^{1}e^{s\log(Z)} X e^{(1-s)\log(Z)}\,\rmd s,
\end{equation}
and therefore
\begin{equation}
	\big(\rmd_{\log Z}\exp\big)^{*}(X) = \int_{0}^{1}e^{-s\log(Z)} X e^{-(1-s)\log(Z)}\,\rmd s,
\end{equation}
It follows that
\begin{align}
	\big(&\rmd_{\log U^{\dagger}W}\exp\big)^{*}\big(U^{\dagger}W\log(U^{\dagger}W)\big) \nonumber\\
	& = \int_{0}^{1}e^{-s\log(U^{\dag}W)} U^{\dagger}W\log(U^{\dagger}W) e^{-(1-s)\log(U^{\dagger}W)}\,\rmd s\nonumber\\
	& = \log(U^{\dagger}W),
\end{align}
and consequently, using \eqref{eqn:expLogAdj},
\begin{align}
	\big(&\rmd_{U^{\dagger}W}\log\big)^{*}\big(\log(U^{\dagger}W)\big) \nonumber\\
	& = \big(\rmd_{U^{\dagger}W}\log\big)^{*}\circ \big(\rmd_{\log U^{\dagger}W}\exp\big)^{*}\big(U^{\dagger}W\log(U^{\dagger}W)\big)\nonumber\\
	& = U^{\dagger}W\log(U^{\dagger}W).
\end{align}
We therefore can rewrite the gradient of $J_{G}$ as 
\begin{align}
\grad J_{G}(U) & = -U\big[\big(\rmd_{U^{\dagger}W}\log\big)^{*}\big(\log(U^{\dagger}W)\big)\big]W^{\dagger}U\nonumber\\
& = -W\log(U^{\dagger}W)W^{\dagger}U\nonumber\\
& = -U\log(U^{\dagger}W).
\end{align}
where the last step follows from the fact that $W^{\dagger}U$ commutes with $\log(U^{\dag}W)$.

\bigskip

\bibliographystyle{IEEEtran}
\bibliography{criticalUnitary}
	
\end{document}